\documentclass[authoryear, 12pt]{elsarticle}
\usepackage[T1]{fontenc}
\usepackage[latin9]{inputenc}
\usepackage{float}
\usepackage{bm}
\usepackage{amsthm}
\usepackage{amsmath}
\usepackage{amssymb}
\usepackage{graphicx}
\usepackage{esint}
\usepackage{cases}
\usepackage[english]{babel}
\usepackage{subcaption}

\makeatletter

%%%%%%%%%%%%%%%%%%%%%%%%%%%%%% LyX specific LaTeX commands.
\floatstyle{ruled}
\newfloat{algorithm}{tbp}{loa}
\providecommand{\algorithmname}{Algorithm}
\floatname{algorithm}{\protect\algorithmname}

%%%%%%%%%%%%%%%%%%%%%%%%%%%%%% Textclass specific LaTeX commands.
\theoremstyle{plain}
\newtheorem{thm}{\protect\theoremname}
  \theoremstyle{plain}
  \newtheorem{lem}[thm]{\protect\lemmaname}
  \theoremstyle{plain}
\newtheorem{cor}{\protect\corollaryname}

%%%%%%%%%%%%%%%%%%%%%%%%%%%%%% User specified LaTeX commands.
\usepackage{dsfont} % indicator function

% Algorithms
\usepackage{algpseudocode}% to have input/output for algorithms
\usepackage{mathtools}

%sets

\newcommand{\ui}{[0,1)} % should we take [0,1] or [0,1)? 

% prob. symbols

\newcommand{\E}{\mathbb{E}}
\newcommand{\var}{\mathrm{Var}}

 % filter
 % hat filter
 % predictive
 %sigma-algebra general
 %sigma-algebra G
 % sigma-algebra on X
 % sigma-algebra on X
 %system density at t
 %observation density t
 %initial density 0
 %system density generic
 %observation density generic
 %initial density generic
 %the cost to go at time t
 %the cost to go general
 %the cost to go 

%dist's
 %normal
 %multinomial
 % Poisson
 % Uniform
 % a generic resampling dist' (e.g. residual)

% bold vectors
\newcommand{\bx}{\mathbf{x}}

\newcommand{\bz}{\mathbf{z}}

% misc math
 %indicator function
\newcommand{\dd}{\mathrm{d}}
\newcommand{\dx}{\dd \mathbf{x}}

 % cardinal of a set
 % for defining a new quantity
\newcommand{\bigO}{\mathcal{O}} %complexity
\newcommand{\smallo}{{\scriptscriptstyle\mathcal{O}}} %complexity
\newcommand{\cvz}{\rightarrow 0} % ->0 (ok, this is lazy...)
\renewcommand{\emptyset}{\varnothing} % because standard empty set is really ugly

 % star discrepancy
 % star norm

% words I tend to mispell

% other
\newcommand{\comment}[1]{ \ifthenelse{ \equal{\showcomment}{true} }{ {\bf #1} }{} }
\newcommand{\showcomment}{true}
 %transform sets into measure
 %label 1 for enumerate
 %label 2 for enumerate
 %label 3 for enumerate

%Hilbert curve
  %Hilbert space filling curve
 %Pseudo inverse of Hilbert SFC

\makeatother

\usepackage{babel}
  \providecommand{\lemmaname}{Lemma}
\providecommand{\theoremname}{Theorem}
\providecommand{\corollaryname}{Corollary}
\newtheorem{prop}{Proposition}

\usepackage[hmargin=1 in]{geometry}

\let\originaleqref\eqref
\renewcommand{\eqref}{Eq.~\originaleqref}

\begin{document}

\begin{frontmatter}

\title{On Integration Methods Based on Scrambled Nets of Arbitrary Size}

\author{Mathieu Gerber\footnote{Present address: Department of Statistics, Harvard University, Science Center 7th floor, One Oxford Street, Cambridge, MA 02138 (mathieugerber@fas.harvard.edu)}}
\address{Faculty of Business and Economics, Internef building, Universit\'e de Lausanne, 1\,015 Lausanne, Switzerland}
\address{Lab of Statistics, CREST,  
3 avenue Pierre Larousse, 92\,240 Malakoff, France}

\date{}

\begin{abstract}
We consider the problem of evaluating $I(\varphi):=\int_{\ui^s}\varphi(\bx) \dx$ for a function $\varphi \in L^2[0,1)^{s}$. In situations where $I(\varphi)$ can be approximated by an estimate of the form $N^{-1}\sum_{n=0}^{N-1}\varphi(\bx^n)$, with $\{\bx^n\}_{n=0}^{N-1}$ a point set in $\ui^s$,  %\citet{Owen1995, Owen1997a,Owen1997b,Owen1998} shows%
it is now well known that the $\bigO_P(N^{-1/2})$ Monte Carlo convergence rate  can be improved by taking for $\{\bx^n\}_{n=0}^{N-1}$ the first $N=\lambda b^m$ points, $\lambda\in\{1,\dots,b-1\}$, of a scrambled $(t,s)$-sequence in base $b\geq 2$. In this paper we  derive a bound for the variance of scrambled net quadrature rules  which is of order $\smallo(N^{-1})$ without any restriction on  $N$. As a corollary, this bound  allows us to   provide simple conditions to get, for any pattern of $N$, an integration error of size $\smallo_P(N^{-1/2})$ for functions that  depend on the quadrature size $N$. Notably, we establish  that  sequential quasi-Monte Carlo (M. Gerber and N. Chopin, 2015, \emph{J. R. Statist. Soc. B}, 77 (3), 509-579) reaches the $\smallo_P(N^{-1/2})$  convergence rate for any values of $N$. In a numerical study, we show that for scrambled net quadrature rules  we can relax the constraint  on $N$ without any loss of efficiency when the integrand $\varphi$ is a discontinuous function while, for  sequential quasi-Monte Carlo, taking $N=\lambda b^m$ may only provide moderate gains.
\end{abstract}

\begin{keyword}
Integration; Randomized quasi-Monte Carlo; Scrambling; Sequential quasi-Monte Carlo.
\end{keyword}

\end{frontmatter}

\section{Introduction}

We consider the problem of evaluating $I(\varphi):=\int_{\ui^s}\varphi(\bx)\dx$
for a function $\varphi \in L^2[0,1)^{s}$. Focussing first on unweighed quadrature rules of the form  $I(P^N,\varphi)=N^{-1}\sum_{n=0}^{N-1}\varphi(\bx^n)$,
with $P^N=\{\bx^n\}_{n=0}^{N-1}$  a  set of $N$ points in $\ui^s$, the simplest way to approximate $I(\varphi)$ is to use the Monte Carlo estimator which selects for $P^N$ a set of $N$ independent uniform random variates on $\ui^s$. The central limit theorem  then ensures that the variance of the approximation error  $I(P^N,\varphi)-I(\varphi)$ is of order $\bigO(N^{-1})$. However, it is now well known that this rate can be improved by taking  for $P^N$  a randomized quasi-Monte Carlo (RQMC) point set. In particular, \citet{Owen1995} proposes a randomization scheme for $(t,s)$-sequences in base $b\geq 2$, known as nested scrambling, such that the variance of the  quadrature rule $I(P^N,\varphi)$ decreases faster than  $N^{-1}$ when $P^N$ is the set made of the first $N$ points of the resulting randomized sequence \citep{Owen1997a,Owen1998}.  \citet{Owen1997a,Owen1998} also establishes that, in this case, $\var(I(P^N,\varphi))\leq c_tN^{-1}\sigma^2$ for a constant $c_t<\infty$ independent of $\varphi$ and where $N^{-1}\sigma^2=N^{-1}\int_{\ui^s} (\varphi(\bx)-I(\varphi))^2\dx$ is the variance of a Monte Carlo quadrature rule of the same size. Interestingly, \citet{Owen1997a} shows that the constant $c_0$ has the additional property to be independent of the dimension $s$.

In some complicated settings, the function $\varphi$ cannot be computed explicitly and/or the dimension $s$ is too large for a simple unweighted quadrature rule $I(P^N,\varphi)$ to be efficient. Important examples where such a problem arises  are parameter and state inference in state space models. Recently, \citet{SQMC} have developed a sequential quasi-Monte Carlo (SQMC) algorithm to carry out sequential  inference in this class of models.  When this algorithm uses points taken from scrambled $(t,s)$-sequences as inputs, it outperforms Monte Carlo methods with an error  of size $\smallo_P(N^{-1/2})$ for continuous and bounded functions  \citep[][Theorem 7]{SQMC}.

However, all these results  apply only for $N=\lambda b^m$, $\lambda\in\{1,\dots,b-1\}$. This restriction on the values of $N$ arises because the approximation error of the aforementioned integration methods  depends on the equidistribution properties of the scrambled nets at hand and, as we go through a scrambled $(t,s)$-sequences in base $b$, sets with the strongest equidistribution properties are constituted of $b^m$ consecutive points, $m\geq t$ (see Section \ref{sec:background} for a review on $(t,s)$-sequences). From a practical point of view, this means that  a (large) variance reduction  can only be obtained at the price of a sharply increasing running time, which may reduce the attractiveness of scrambled net integration methods when one is interested, e.g., to reach a given level of precision at the lowest computational effort.

The objective of this paper is to study quadrature rules and SQMC based on scrambled nets of arbitrary size. Our main theoretical contribution is to provide a bound for the variance of the scrambled net quadrature rule $I(P^N,\varphi)$ which shows  that the $\smallo(N^{-1})$ convergence rate    obtained by \citet{Owen1997a,Owen1998} 
under the restriction $N=\lambda b^m$ in fact holds for any pattern $N$. This bound also provides conditions to have an error of size $\smallo_P(N^{-1/2})$ for the integral of a function   $\varphi_N$ which  depends on the quadrature size $N$, as it typically happens  in sequential estimation methods. A consequence of this last result is  the asymptotic superiority of  SQMC over  sequential Monte Carlo  algorithms without any restriction on $N$. Relaxing the constraint  $N=b^m$ is particularly important for SQMC because in many applications (such as, e.g., target tracking)  inference in state space models should be carried in real time and, consequently, it may be too costly to  double the number of simulations in order to reduce the variance (assuming $b=2$). Having a free control of $N$  is also crucial for parameter inference in state space models if, e.g., one wants to use SQMC as a sampling strategy inside particle Markov Chain Monte Carlo   methods \citep{PMCMC}. Indeed, efficient allocations of the computational budget between the time spent to run the filtering algorithm and the length of the Markov chain require a fine control of $N$ as explained, e.g., in \citet{Doucet2013}.

In addition to a variance of order  $\smallo(N^{-1})$, we show two interesting properties of scrambled net quadrature rules of arbitrary size. First, when points of a scrambled $(0,s)$-sequence are used, the variance of the quadrature rule admits a bound of the form $c^*_0\sigma^2N^{-1}$ for an explicit constant $c^*_0>0$ which is independent of  the integrand $\varphi$ and of the dimension $s$. Second, \citet[][Theorem 4]{Yue1999a} establish that for smooth integrands the integration error of quadratures based on scrambled sequences is of order  $\bigO_P(N^{-1}(\log N)^{(s-1)/2})$. We note in this work that for such functions the error  is in fact of size  $\bigO_P(N^{-1})$. In a recent paper, \citet{Owen2014}  has shown that this rate is the best  we can achieve uniformly in  $N$ for equally weighted  quadrature rules and therefore, on this class of functions, quadratures based on scrambled sequences have the optimal worst case behaviour.

The rest of this paper is organized as follows. Section \ref{sec:background} gives the notation and the  background material used in this work. The announced results for  quadrature rules $I(P^N,\varphi)$ based on scrambled nets are formally stated in Section \ref{sec:res}. In Section \ref{sec:size} we provide  conditions to get the $\smallo_P(N^{-1/2})$ convergence rate for integrands that depend on $N$ and discuss the application of this result in the context of  SQMC. To simplify the presentation, we propose in this section a convergence result for a scrambled net version of the sampling importance resampling (SIR) algorithm introduced by \citet{Rubin1987,Rubin:SIR} rather than for SQMC. This SIR algorithm based on scrambled nets is sequentially used in SQMC and the steps to prove its error rate are exactly the same as the ones needed to relax the constrain on $N$ in  \citet[][Theorem 7]{SQMC}. In Section \ref{sec:num} the question of the impact of $N$ on the convergence rate  for both scrambled nets quadrature rules and  for SQMC  is analysed in a numerical study while Section \ref{sec:conc} concludes.

\section{Background}\label{sec:background}

In this section we provide the background material on $(t,s)$-sequences, scrambled sequences and on the Haar-like decomposition of $L^2\ui^s$ introduced by \citet{Owen1997a}. Only the concepts and the results used in this paper are presented. For a complete exposition of these notions we refer the reader, respectively, to \citet[][Chapter 4]{dick2010digital}, \citet{Owen1995} and \citet{Owen1997a,Owen1998}. 

For integers $s\geq 1$ and $b\geq 2$, let
$$
\mathcal{E}^b=\left\{\prod_{j=1}^s\left[a_j b^{-d_j},(a_j+1)b^{-d_j}\right)\subseteq \ui^s,\, a_j,\,d_j\in\mathbb{N},\, a_j< b^{d_j},
\, j=1,...,s\right\} 
$$
be the set of all $b$-ary boxes.

Let $t$ and $m$ be two positive  integers such that $m\geq t$. Then, the point set $\{\bx^n\}_{n=0}^{b^m-1}$ is called a $(t,m,s)$-net in base $b$ if every $b$-ary box of volume $b^{t-m}$ contains exactly $b^t$ points, while the point set $\{\bx^n\}_{n=0}^{\lambda b^{m}-1}$, $\lambda\in\{1,\dots,b-1\}$, is called a $(\lambda,t,m,s)$-net if every $b$-ary box of volume $b^{t-m}$ contains exactly $\lambda b^t$ points and no $b$-ary box  of volume $b^{t-m-1}$ contains more than $b^t$ points. A sequence  $(\bx^n)_{n\geq 0}$ of points in $\ui^s$ is  called a $(t,s)$-sequence in base $b\geq 2$ if, for any integers $a\geq 0$ and $m\geq  t$, the point set $\{\bx^n\}_{n=ab^m}^{(a+1)b^m-1}$ is a $(t,m,s)$-net in base $b$.  Finally, note that if $(\bx^n)_{n\geq 0}$ is a $(t,s)$-sequence in base $b$, then, for $\lambda\in\{1,\dots,b-1\}$, $\{\bx^n\}_{n=ab^{m+1}}^{ab^{m+1}+\lambda b^m-1}$ is a  $(\lambda,t,m,s)$-net for any integers $a\geq 0$ and $m\geq t$.

To introduce the Haar-like decomposition of $L^2\ui^s$ developed by \citet{Owen1997a}, let $u\subseteq\mathcal{S}:=\{1,...,s\}$, $\kappa$ be a vector of  $|u|$ non negative integers $k_{(u,j)}$, $j\in \{1,\dots,|u|\}$, $|\kappa|=\sum_{j=1}^{|u|}k_{(u,j)}$, and 
$$
\mathcal{E}^b_{u,\kappa}=\left\{\prod_{j=1}^s\left[a_j b^{-d_j},(a_j+1)b^{-d_j}\right)\in\mathcal{E}^b:\,d_j=k_{(u,j)}+1\text{ if } j\in u \text{ and } d_j=0 \text{ if }  j\notin u\right\}.
$$
Then, \citet{Owen1997a} shows that $\varphi(\bx)=\sum_{u\subseteq\mathcal{S}}\sum_{\kappa}
\nu_{u,\kappa}(\bx)$
where, for any $u\subseteq\mathcal{S}$, we use the shorthand $\sum_{\kappa}=\sum_{k_{(u,1)}=0}^{\infty}\dots\sum_{k_{(u,|u|)}=0}^{\infty}$ and $\nu_{u,\kappa}$ is a step function, constant over each of the $b^{|u|+|\kappa|}$ sets $E\in \mathcal{E}^b_{u,\kappa}$ and which integrates  to zero over any $b$-ary box that strictly contains a set $E\in \mathcal{E}^b_{u,\kappa}$. These step functions are mutually orthogonal and $\nu_{\emptyset,()}$ is constant over $\ui^s$. The resulting ANOVA decomposition of $\varphi$ is given by
\begin{equation}\label{eq:sigma}
\sigma^2=\sum_{|u|>0}\sum_{\kappa}\sigma_{u,\kappa}^2
\end{equation}
with $\sigma_{u,\kappa}^2=\int_{\ui^s} \nu_{u,\kappa}^2(\bx)\dx$.

Let $P^N=\{\bx^n\}_{n=0}^{N-1}$, $\bx^n=(x_1^n,\dots,x_s^n)$, be the first $N\geq 1$ points of a $(t,s)$-sequence in base $b\geq 2$ where, for $j\in\mathcal{S}$, $x_j^n=\sum_{i=1}^{\infty}a_{jni}b^{-i}$ with $a_{jni}\in\{0,\dots,b-1\}$ for all $n$ and $i$.  \citet{Owen1995} proposes a method to randomly  permute the digits $a_{jnk}$   such that the scrambled point set $\tilde{P}^N=\{\tilde{\bx}^n\}_{n=0}^{N-1}$  preserves almost surely the equidistribuion properties of the original net $P^N$. In addition, under this randomization scheme, each $\tilde{\bx}^n$ is marginally uniformly distributed on $\ui^s$ and \citet{Owen1997a} shows that
\begin{equation}\label{eq:gen}
\var\left(I(\tilde{P}^N,\varphi)\right)=\frac{1}{N}\sum_{|u|>0}\sum_{\kappa}\Gamma_{u,\kappa}\sigma^2_{u,\kappa}
\end{equation}
where $\Gamma_{u,\kappa}$ depends on the properties of the non scrambled point set $\{\bx^{n}\}_{n=0}^{N-1}$. In particular, for an arbitrary value of $N\in\mathbb{N}^*$, the gain factors $\Gamma_{u,\kappa}$  are bounded by \citep[][Lemma 11]{Hickernell2001} 
\begin{equation}\label{eq:genG}
\Gamma_{u,\kappa}\leq b^{t+1}\left(\frac{b+1}{b-1}\right)^{s+1}.
\end{equation}
When the point set  $P^N$ is a  $(\lambda,t,m,s)$-net, the gain factors  can be more precisely controlled. Notably, \citet[][Lemma 2]{Owen1998} obtains
\begin{align}\label{res}
\var\left(I(\tilde{P}^{N},\varphi)\right)=\frac{1}{N}\sum_{|u|>0}\sum_{|\kappa|>m-t-|u|} \Gamma_{u,\kappa}\sigma^2_{u,\kappa}
\end{align}
where    $\Gamma_{u,\kappa}\leq \Gamma^{(b)}_{t,s}$ with $\Gamma^{(b)}_{0,s}=e$ if $b\geq s$ (\citealp[][Theorem 1]{Owen1997b}; \citealp[][Lemma 6]{Hickernell2001}) and, for $t>0$, $\Gamma^{(b)}_{t,s}=b^t(b+1)^s/(b-1)^s$ \citep[][Lemma 4]{Owen1998}. Together with equation \eqref{res}, these bounds  for the gain factors  imply that
\begin{align}\label{target}
\var\left(I(\tilde{P}^{N},\varphi)\right)=\smallo(N^{-1}),\quad  \var\left(I(\tilde{P}^{N},\varphi)\right)\leq \Gamma^{(b)}_{t,s}\frac{\sigma^2}{N}
\end{align}
where we recall that $\tilde{P}^{N}$ contains the first $N=\lambda b^m$ points of a scrambled $(t,s)$-sequence in base $b\geq 2$. 

We conclude this section by noting that all the results presented in this work also hold for the computationally cheaper scrambling method proposed by \citet{Matousek1998}, although  in what follows we will only refer  to the scrambling technique developed by \citet{Owen1995}  for ease of presentation. In addition, even if it is not always explicitly mentioned, all the scrambled nets we consider in this paper are  made of the first $N$ points of a scrambled $(t,s)$-sequence.

\section{Quadratures based on scrambled nets of arbitrary size}\label{sec:res}

\subsection{Error bounds}

A first result concerning the error bound of quadratures based on scrambled nets of an arbitrary size $N\geq 1$  can be directly deduced from \eqref{eq:gen} and \eqref{eq:genG}. Indeed, if $\tilde{P}^{N}$  contains the first  $N\in\mathbb{N}^*$ points of a scrambled $(t,s)$-sequence in base $b\geq 2$, these two bounds imply  that
\begin{equation}\label{eq:basic}
\var\left(I(\tilde{P}^{N},\varphi)\right)\leq \frac{\sigma^2}{N}b^{t+1}\left(\frac{b+1}{b-1}\right)^{s+1}=\Gamma^{(b)}_{t+1,s+1}\frac{\sigma^2}{N}
\end{equation}
so that the variance of a scrambled net quadrature is never larger than a constant times the Monte Carlo variance. However, this bound is larger than the one in \eqref{target} obtained under the restriction $N=\lambda b^m$  because the equidistribution properties of $\tilde{P}^N$ are the  strongest when $N$ satisfies this constraints. %Interestingly, the bound in \eqref{eq:basic} is the same as in \eqref{target} but with $t$ replaced by $t+1$ and $s$ replaced by $s+1$. 

The following theorem is the main result of this work and  provides a sharper bound (for $N$ large enough) for the integration error (see \ref{app:Thm} for a proof).

\begin{thm}\label{thm:GenOwen} Let $\varphi \in L^2[0,1)^{s}$,
$\sigma^{2}=\int_{\ui^{s}}\varphi^{2}(\bx)\dx-\big(\int_{\ui^{s}}\varphi(\bx)\dx\big)^{2}$
and $\tilde{P}^N=\{\tilde{\bx}^{n}\}_{n=0}^{N-1}$
be the first $N\in\mathbb{N}^*$ points of a $(t,s)$-sequence in base $b\geq 2$ scrambled as in \citet{Owen1995}. Let $N\geq 1$ and $k\in\mathbb{N}$ be such that $b^k\leq N<b^{k+1}$. Then,
\begin{align*}
\var\big(I(\tilde{P}^{N},\varphi)\big)&\leq 2\frac{\Gamma_{t,s}^{(b)}}{N}
\bigg\{
(1+c_b) B^{(k)}_t+c_b\Big[B^{(k)}_{t+1} +\sum_{|u|>0} b^{-\frac{k-1-t-|u|}{2}}\sum_{|\kappa|\leq k-1-t-|u|}\,b^{\frac{|\kappa|}{2}}\sigma_{u,\kappa}^2\Big]\bigg\}\\
&+b^{2t}\frac{\sigma^2}{N^2}
\end{align*}
where $c_b=\frac{(b-1)^{1/2}}{b^{1/2}-1}$,
$$
B^{(k)}_c=\sum_{|u|>0}\, \sum_{|\kappa|>k-c-|u|}\sigma_{u,\kappa}^2+\sum_{|u|>0} b^{-(k-c-|u|)} \sum_{|\kappa|\leq k-c-|u|} \sigma_{u,\kappa}^2 b^{|\kappa|},\quad c\in\mathbb{N}
$$
and where we use the convention that empty sums are null.
\end{thm}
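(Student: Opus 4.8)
The plan is to decompose the arbitrary-size net $\tilde P^N$ into a disjoint union of nets that individually satisfy the strong equidistribution properties recorded in \eqref{res}, estimate the variance of each piece via the known gain-factor bounds, and then reassemble. Write $N$ in base $b$ as $N=\sum_{i=0}^{k}\lambda_i b^i$ with $\lambda_i\in\{0,\dots,b-1\}$ and $\lambda_k\ge 1$. By the concatenation property of $(t,s)$-sequences, $\tilde P^N$ splits into at most one $(\lambda_i,t,i,s)$-net for each $i\in\{t,\dots,k\}$ (the first few digits contribute a bounded leftover net of size $<b^t$). Since $I(\tilde P^N,\varphi)=N^{-1}\sum_j |Q_j|\, I(\tilde Q_j,\varphi)$, where $\{\tilde Q_j\}$ are these sub-nets and $|Q_j|$ their cardinalities, Cauchy--Schwarz on the sum gives
\begin{equation*}
\var\big(I(\tilde P^N,\varphi)\big)\;\le\; \frac{1}{N^2}\Big(\sum_j |Q_j|\Big)\sum_j |Q_j|\,\var\big(I(\tilde Q_j,\varphi)\big)\;=\;\frac{1}{N}\sum_j |Q_j|\,\var\big(I(\tilde Q_j,\varphi)\big),
\end{equation*}
but this naive bound loses too much; the covariances across blocks must be handled more carefully, so instead I would work directly with Owen's Haar decomposition $\varphi=\sum_{u,\kappa}\nu_{u,\kappa}$ and expand $\var(I(\tilde P^N,\varphi))=\sum_{u,\kappa}\var(I(\tilde P^N,\nu_{u,\kappa}))$ (orthogonality plus the fact that scrambling preserves the mean-zero property of each $\nu_{u,\kappa}$). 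For each $(u,\kappa)$, the per-component variance equals $N^{-2}\E[(\sum_n \nu_{u,\kappa}(\tilde\bx^n))^2]$, and this is where the block structure of the sequence is exploited: one splits the index range $\{0,\dots,N-1\}$ into the elementary $b$-ary intervals of length $b^{k}$ (or the relevant resolution level) and uses that within a block on which the net is $(t,i,s)$-equidistributed the contribution of $\nu_{u,\kappa}$ is either exactly controlled (when $|\kappa|$ is large relative to $i-t-|u|$, giving the first sum in $B^{(k)}_c$) or bounded by the ``spill-over'' term $b^{-(i-t-|u|)}\sigma_{u,\kappa}^2 b^{|\kappa|}$ coming from \eqref{res} applied at resolution $i$.

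The key quantitative step is to sum these per-block bounds over the $\bigO(k)$ resolution levels $i=t,\dots,k$ while tracking the geometric decay. For a fixed $(u,\kappa)$ with $|\kappa|\le k-t-|u|$, the relevant blocks occur at levels $i\ge |\kappa|+t+|u|$, and at level $i$ the block contribution scales like $b^{-(i-t-|u|)}b^{|\kappa|}\sigma_{u,\kappa}^2$ times the number of such blocks, which is $\bigO(b^{k-i})$; multiplying by the $(|Q_j|/N)^2\sim b^{2(i-k)}$ weighting and summing the geometric series in $i$ produces exactly the factor $b^{-(k-t-|u|)/2}b^{|\kappa|/2}$ appearing in the bracketed sum of the theorem, with the constant $c_b=(b-1)^{1/2}/(b^{1/2}-1)$ emerging as $\sum_{j\ge 0} b^{-j/2}$ times the $(b-1)$ possible digit values $\lambda_i$. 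The terms $B^{(k)}_t$ and $B^{(k)}_{t+1}$ collect, respectively, the contribution of the top-level block (a $(\lambda_k,t,k,s)$-net, handled by \eqref{res} with $m=k$) and the next-level refinement needed because the leftover digits force us to also look at resolution $k+1$; the stray $b^{2t}\sigma^2/N^2$ absorbs the bounded initial net of fewer than $b^t$ points, each of whose gain factor is crudely bounded using $\var\le b^{2t}\sigma^2$. Throughout, the uniform gain-factor bound $\Gamma_{u,\kappa}\le\Gamma^{(b)}_{t,s}$ from \citet{Owen1998} and \citet{Hickernell2001} is what lets us pull $\Gamma^{(b)}_{t,s}$ out front, and the overall factor $2$ comes from splitting the double-counting between adjacent resolution levels via $2ab\le a^2+b^2$.

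The main obstacle I anticipate is the cross-block covariance bookkeeping: unlike the $N=\lambda b^m$ case, the blocks making up $\tilde P^N$ have different sizes and are not jointly a single equidistributed net, so $\E[\nu_{u,\kappa}(\tilde\bx^n)\nu_{u,\kappa}(\tilde\bx^{n'})]$ for $n,n'$ in different blocks does not vanish and must be estimated — the natural route is to observe that nested scrambling makes digits in disjoint sub-nets conditionally independent given a coarse common prefix, so the covariance is controlled by the variance of the coarse-level average, which feeds back into the same geometric sum; making this rigorous, and verifying that the resulting constants are exactly $(1+c_b)$ and $c_b$ rather than something larger, is the delicate part. A secondary technical point is handling the initial segment of fewer than $b^t$ points correctly so that it contributes only the harmless $b^{2t}\sigma^2/N^2$ and does not degrade the leading $N^{-1}$ term.
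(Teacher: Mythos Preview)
Your block decomposition $N=\sum_{i=0}^k \lambda_i b^i$ into $(\lambda_i,t,i,s)$-nets plus a remainder of size $<b^t$ is exactly the paper's starting point, and your handling of the remainder (yielding $b^{2t}\sigma^2/N^2$) is correct. But from there you take a harder road than necessary, and the ``main obstacle'' you anticipate is one the paper never faces.

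The paper does \emph{not} expand in Haar components $\nu_{u,\kappa}$ and does \emph{not} analyse the cross-block covariances $\E[\nu_{u,\kappa}(\tilde\bx^n)\nu_{u,\kappa}(\tilde\bx^{n'})]$ at all. Instead it applies the triangle inequality for standard deviations (Minkowski) directly to the block sums:
\[
\var\Big(\tfrac1N\sum_n\varphi(\tilde\bx^n)\Big)\;\le\;\bigg(\tfrac1N\sum_{m=t}^k\Big\{\var\big(a_m b^m\hat I_m\big)\Big\}^{1/2}+\tfrac1N\Big\{\var\big(\textstyle\sum_{\tilde\bx\in\tilde P}\varphi(\tilde\bx)\big)\Big\}^{1/2}\bigg)^2,
\]
with $\hat I_m=I(\tilde P_m,\varphi)$. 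Squaring gives a diagonal sum $\sum_m(a_mb^m)^2\var(\hat I_m)$ and cross terms $2\sum_{m>n}a_mb^m a_nb^n\{\var(\hat I_m)\var(\hat I_n)\}^{1/2}$; no covariance between blocks ever appears. Each $\var(\hat I_m)$ is then bounded by \eqref{res}, and the cross terms are handled with $\sqrt{ab}\le(a+b)/2$. Your proposed route via nested-scrambling conditional independence of digits could in principle be made to work, but it is substantially more delicate and is entirely bypassed here.

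The second ingredient you are missing is a dedicated summation lemma (Lemma~\ref{lemma:M} in the paper): for $v_m\in[0,b-1]$,
\[
\sum_{m=t}^k v_m b^m\sum_{|u|>0}\sum_{|\kappa|>m-t-|u|}\sigma_{u,\kappa}^2\;\le\;\Big(\sum_{m=t}^k v_mb^m\Big)\sum_{|u|>0}\sum_{|\kappa|>k-t-|u|}\sigma_{u,\kappa}^2+b^k\sum_{|u|>0}b^{-(k-t-|u|)}\sum_{|\kappa|\le k-t-|u|}\sigma_{u,\kappa}^2b^{|\kappa|}.
\]
This is what produces $B^{(k)}_t$ from the diagonal terms; it is proved by rewriting the double sum via Abel summation, not by any equidistribution argument. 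The term $B^{(k)}_{t+1}$ does not come from ``looking at resolution $k+1$'' as you suggest: it arises because, after AM--GM on the cross terms, one of the two resulting sums has the inner index shifted by one level, and Lemma~\ref{lemma:M} is applied a second time with $k$ replaced by $m-1$ and $b$ by $b^{1/2}$ (which is also where the half-power $b^{|\kappa|/2}$ in the bracketed sum originates). Your identification of $c_b$ with a $b^{-1/2}$-geometric series is right in spirit: concretely it enters via the bound $\sum_{n=t}^{m-1}a_n^{1/2}b^{n/2}\le c_b\,b^{m/2}$. The overall factor $2$ is from $(c_1^{1/2}+c_2^{1/2})^2\le 2(c_1+c_2)$ applied at the very end to recombine the remainder and main pieces, not from splitting adjacent resolution levels.
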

 
The bound provided in Theorem \ref{thm:GenOwen} is hard to interpret but its main purpose  is to study the rate at which the variance goes to zero as the quadrature size increases. Thanks to Kronecker's lemma,  we show in Corollary \ref{cor:smallN} below that  this theorem  implies that for any square integrable function the error  is of size $\smallo_P(N^{-1/2})$ without any restriction on $N$. Due to its importance for this work, Kronecker's lemma  is recalled  in Lemma \ref{lemma:Kronecker} below \citep[see, e.g.,][Lemma 2, p.390, for a proof]{Probability}.

\begin{lem}[Kronecker's Lemma]\label{lemma:Kronecker}
Let $(d_n)_{n\geq 1}$ be a sequence of positive increasing numbers such that $d_n\rightarrow \infty$ as $n\rightarrow\infty$, and let $(z_n)_{n\geq 1}$ be a sequence of numbers such that $\sum_{n=1}^{\infty}z_n$ converges. Then, as $N\rightarrow\infty$, $d^{-1}_N\sum_{n=1}^N d_nz_n\rightarrow 0$.
\end{lem}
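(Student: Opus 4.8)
The plan is to derive the conclusion from the convergence of $\sum_n z_n$ by summation by parts (Abel's identity), which turns the weighted partial sum $d_N^{-1}\sum_{n=1}^N d_n z_n$ into a Ces\`aro-type average of the tails of the series. Set $S=\sum_{n=1}^\infty z_n$, which exists by hypothesis, and write $r_n=S-\sum_{k=1}^n z_k$ for the $n$-th tail, so that $r_n\to 0$ as $n\to\infty$ and $z_n=r_{n-1}-r_n$ for $n\geq 1$ (with the convention $r_0=S$).

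The first step is Abel summation:
\begin{equation*}
\sum_{n=1}^N d_n z_n=\sum_{n=1}^N d_n(r_{n-1}-r_n)=d_1 r_0+\sum_{n=1}^{N-1}(d_{n+1}-d_n)\,r_n-d_N r_N .
\end{equation*}
Dividing by $d_N$ yields
\begin{equation*}
\frac{1}{d_N}\sum_{n=1}^N d_n z_n=\frac{d_1 r_0}{d_N}+\frac{1}{d_N}\sum_{n=1}^{N-1}(d_{n+1}-d_n)\,r_n-r_N ,
\end{equation*}
and I would observe that the first term vanishes because $d_N\to\infty$ while the last term vanishes because $r_N\to 0$. Everything thus reduces to showing that the middle term tends to $0$.

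For the middle term I would use a Toeplitz-type estimate. The coefficients $w_{N,n}:=(d_{n+1}-d_n)/d_N$ are nonnegative, since $(d_n)$ is increasing, and telescope to $\sum_{n=1}^{N-1}w_{N,n}=(d_N-d_1)/d_N\leq 1$. Given $\varepsilon>0$, fix $M$ such that $|r_n|\leq\varepsilon$ for all $n\geq M$, and for $N>M$ split the sum at $M$: the head $\sum_{n=1}^{M-1}(d_{n+1}-d_n)|r_n|$ is bounded by $d_M\max_{1\leq n<M}|r_n|$, a fixed quantity, so after division by $d_N$ it is $\leq\varepsilon$ once $N$ is large enough; the tail satisfies $\sum_{n=M}^{N-1}w_{N,n}|r_n|\leq\varepsilon\sum_{n=M}^{N-1}w_{N,n}\leq\varepsilon$. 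Hence the middle term is at most $2\varepsilon$ for all large $N$, and letting $\varepsilon\downarrow 0$ gives the claim; combining the three pieces shows $d_N^{-1}\sum_{n=1}^N d_n z_n\to 0$.

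The whole argument is elementary real analysis, and the only delicate point is the estimate of the middle term, where one must simultaneously exploit that the tails $r_n$ are uniformly small beyond a fixed index and that the divergence $d_N\to\infty$ washes out the contribution of the first finitely many terms. Since Kronecker's lemma is a classical result, the above merely reconstructs the standard proof to which the text refers.
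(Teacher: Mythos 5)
Your proof is correct and is exactly the classical summation-by-parts argument that the paper itself does not reproduce but delegates to the cited reference (Shiryaev, \emph{Probability}, Lemma 2, p.~390); the Abel identity, the telescoping bound $\sum_{n=1}^{N-1}(d_{n+1}-d_n)/d_N\leq 1$, and the head/tail split of the Toeplitz average are all handled properly. Nothing to add.
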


%From Theorem \ref{thm:GenOwen} we can deduce a simple expression for the bound of the variance of scrambled net quadrature rules of arbitrary size by noting that $B_c{(k)}\leq \sigma^2$ for any $c\in\mathbb{N}$; that is,
%$$
%\var\left(\frac{1}{N}\sum_{n=0}^{N-1}\varphi(\tilde{\bx}^n)\right)\leq \frac{\sigma^2}{N}\bigg\{\Big[\Gamma_{t,s}^{(b)}(1+3c_b)\Big]^{1/2}+\frac{b^t}{N^{1/2}}\bigg\}^2.
%$$
%However we note from the proof of Theorem \ref{thm:GenOwen} that a sharper bound can be obtained. In addition, when $t=0$,  we note that the variance of quadratures based on points taken from a scrambled $(0,s)$-sequences is never larger than a universal constant $c_0^*$  time the Monte Carlo variance. These  results are collected in the following corollary.

If the expression of the bound given Theorem \ref{thm:GenOwen}  is rather complicated, we note from the proof of this result that the variance of quadratures based on points taken from  scrambled $(0,s)$-sequences is never larger than a universal constant $c_0^*$  times the Monte Carlo variance. In addition, for $t>0$,  we derive from the proof of this theorem a simple bound for the variance  which is in most cases sharper than the one given in \eqref{eq:basic}. These  results are collected in the following corollary.

\begin{cor}\label{cor:smallN}
Consider the set-up of Theorem \ref{thm:GenOwen}. Then, $$\var\big(I(\tilde{P}^{N},\varphi)\big)=\smallo(N^{-1}).
$$
In addition, 
\begin{align}
\var\big(I(\tilde{P}^{N},\varphi)\big)&\leq \frac{\sigma^2}{N}\bigg\{\Big[\Gamma_{t,s}^{(b)}(1+2c_b)\Big]^{1/2}+\frac{b^t}{N^{1/2}}\bigg\}^2\label{eq:B1}
\end{align}
and, for $t=0$,
\begin{equation}\label{eq:B2}
\var\big(I(\tilde{P}^{N},\varphi)\big)\leq \frac{\sigma^2}{N}e\,(3+2\sqrt{2})<15.85 \frac{\sigma^2}{N}.
\end{equation}
\end{cor}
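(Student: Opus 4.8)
The plan is to extract Corollary \ref{cor:smallN} from Theorem \ref{thm:GenOwen} by carefully tracking how each of the pieces in the bound behaves as $N\to\infty$, and by re-bounding those pieces uniformly in $N$ to get the explicit constants in \eqref{eq:B1} and \eqref{eq:B2}. The starting observation is that, writing $N$ with $b^k\le N<b^{k+1}$, we have $N\to\infty$ iff $k\to\infty$, so the $\smallo(N^{-1})$ claim amounts to showing that every bracketed quantity in Theorem \ref{thm:GenOwen}, multiplied by $\Gamma_{t,s}^{(b)}/N$ or by $b^{2t}/N^2$, is $\smallo(N^{-1})$; equivalently, after multiplying through by $N$, each term tends to $0$ as $k\to\infty$. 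The term $b^{2t}\sigma^2/N^2$ is trivially $\smallo(N^{-1})$. The real work is to show $B^{(k)}_t\to 0$, $B^{(k)}_{t+1}\to 0$, and $\sum_{|u|>0}b^{-(k-1-t-|u|)/2}\sum_{|\kappa|\le k-1-t-|u|}b^{|\kappa|/2}\sigma_{u,\kappa}^2\to 0$ as $k\to\infty$.

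For $B^{(k)}_c$ the argument is the clean part. Fix $c\in\mathbb N$. Since $\sigma^2=\sum_{|u|>0}\sum_\kappa\sigma_{u,\kappa}^2<\infty$ by \eqref{eq:sigma}, the first sum $\sum_{|u|>0}\sum_{|\kappa|>k-c-|u|}\sigma_{u,\kappa}^2$ is the tail of a convergent series and goes to $0$ as $k\to\infty$ by dominated convergence over the countable index set. For the second sum, $\sum_{|u|>0}b^{-(k-c-|u|)}\sum_{|\kappa|\le k-c-|u|}\sigma_{u,\kappa}^2 b^{|\kappa|}$, I would apply Kronecker's lemma (Lemma \ref{lemma:Kronecker}): enumerate the terms $\sigma_{u,\kappa}^2$ by the value of $|\kappa|+|u|$ (grouping the finitely many $(u,\kappa)$ with a given value of that sum), so that with $d_n=b^{n}$ and $z_n$ the sum of $\sigma_{u,\kappa}^2$ over pairs with $|u|+|\kappa|=n$ (so $\sum z_n\le\sigma^2<\infty$), the quantity in question equals $b^{c}\,d_{k-c}^{-1}\sum_{n\le k-c}d_n z_n$, which tends to $0$ by Kronecker. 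The mixed term with the $b^{|\kappa|/2}$ weight is handled the same way with $d_n=b^{n/2}$ in place of $b^n$, after pulling out the $b^{-(k-1-t)/2}$ and $b^{|u|/2}$ factors appropriately — I would just need to check the indexing lines up, grouping again by $|u|+|\kappa|$ and using $\sum z_n<\infty$. Combining these, every curly-brace term in Theorem \ref{thm:GenOwen} goes to $0$, which gives $\var(I(\tilde P^N,\varphi))=\smallo(N^{-1})$.

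For the explicit bounds \eqref{eq:B1} and \eqref{eq:B2}, the idea is instead to bound each structural sum by $\sigma^2$ uniformly in $k$, since $B^{(k)}_c\le\sum_{|u|>0}\sum_\kappa\sigma_{u,\kappa}^2=\sigma^2$ (both pieces of $B^{(k)}_c$ are sub-sums of $\sigma^2$, the second because $b^{-(k-c-|u|)}b^{|\kappa|}\le1$ exactly when $|\kappa|\le k-c-|u|$), and similarly the mixed term satisfies $b^{-(k-1-t-|u|)/2}b^{|\kappa|/2}\le1$ on its summation range, so it too is bounded by $\sigma^2$. Plugging $B^{(k)}_t\le\sigma^2$, $B^{(k)}_{t+1}\le\sigma^2$, and (mixed term)$\le\sigma^2$ into Theorem \ref{thm:GenOwen} gives
\begin{equation*}
\var\big(I(\tilde P^N,\varphi)\big)\le \frac{2\Gamma_{t,s}^{(b)}\sigma^2}{N}\Big\{(1+c_b)+c_b(1+1)\Big\}+\frac{b^{2t}\sigma^2}{N^2}=\frac{\sigma^2}{N}\Big\{2\Gamma_{t,s}^{(b)}(1+2c_b)+\frac{b^{2t}}{N}\Big\},
\end{equation*}
which is already of the right shape; the stated form $\sigma^2 N^{-1}\{[\Gamma_{t,s}^{(b)}(1+2c_b)]^{1/2}+b^t N^{-1/2}\}^2$ expands to $\sigma^2 N^{-1}\{\Gamma_{t,s}^{(b)}(1+2c_b)+2b^t[\Gamma_{t,s}^{(b)}(1+2c_b)]^{1/2}N^{-1/2}+b^{2t}N^{-1}\}$, and since $\Gamma_{t,s}^{(b)}(1+2c_b)\ge1$ one checks $2[\Gamma_{t,s}^{(b)}(1+2c_b)]^{1/2}\le 2\Gamma_{t,s}^{(b)}(1+2c_b)$, so the expanded expression dominates my bound — hence \eqref{eq:B1}. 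For \eqref{eq:B2}, set $t=0$, so $\Gamma_{0,s}^{(b)}=e$ (at least when $b\ge s$, which is the regime in which this universal constant is claimed), $b^{2t}=1$, and absorb the $N^{-1}$ inside the brace using $N\ge1$ together with $c_b\le\sqrt2$ (since $c_b=(b-1)^{1/2}/(b^{1/2}-1)=(b^{1/2}+1)/1\cdot(b^{1/2}-1)^{1/2}/\dots$ — more simply, $c_b$ is decreasing in $b$ with $c_2=\sqrt2$, so $c_b\le\sqrt2$): this yields $\var\le\sigma^2 N^{-1}\{2e(1+2\sqrt2)+1\}$, and I would then note $2e(1+2\sqrt2)+1\le e(3+2\sqrt2)$ (equivalently $1\le e(1-2\sqrt2+... )$ — I would verify the arithmetic; $e(3+2\sqrt2)\approx 15.84$ while $2e(1+2\sqrt2)+1\approx 21.9$, so in fact the sharper route is to feed the $t=0$ gain factor bound $\Gamma_{0,s}^{(b)}=e$ together with the improved observation from the proof that for $t=0$ the leading structural constant collapses, which is presumably why the paper says ``we derive from the proof''). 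The main obstacle, then, is not the Kronecker argument but getting the constant in \eqref{eq:B2} down to $e(3+2\sqrt2)$: this requires looking inside the proof of Theorem \ref{thm:GenOwen} rather than applying its final statement as a black box, because naive substitution of the uniform bounds into the theorem's displayed inequality gives a constant roughly $40\%$ too large. I would locate, in the proof, the point where the $(t,s)=(0,s)$ net structure lets several of the four error contributions be merged or dropped, and re-run the Cauchy–Schwarz/AM–GM step there with $\Gamma_{0,s}^{(b)}=e$ to obtain the factor $(3+2\sqrt2)=(1+\sqrt2)^2$.
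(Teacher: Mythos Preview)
Your argument for the rate $\var=\smallo(N^{-1})$ is correct and matches the paper's: both apply Kronecker's lemma to each of the weighted sums in Theorem~\ref{thm:GenOwen} (the paper groups by $|\kappa|$ for each fixed $u$, you group by $|u|+|\kappa|$, but the mechanism is identical).

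The gap is in \eqref{eq:B1}. Your black-box substitution into Theorem~\ref{thm:GenOwen} actually gives
\[
\var\le \frac{2\Gamma_{t,s}^{(b)}\sigma^2}{N}\big\{(1+c_b)+c_b(1+1)\big\}+\frac{b^{2t}\sigma^2}{N^2}
=\frac{\sigma^2}{N}\Big\{2\Gamma_{t,s}^{(b)}(1+3c_b)+\frac{b^{2t}}{N}\Big\},
\]
not $2\Gamma_{t,s}^{(b)}(1+2c_b)$; and even with your arithmetic slip, your bound is \emph{larger} than the right-hand side of \eqref{eq:B1} for all large $N$ (the latter tends to $\Gamma_{t,s}^{(b)}(1+2c_b)\sigma^2/N$), so \eqref{eq:B1} does not follow from it. The factor of $2$ you cannot remove comes from the very last step of the proof of Theorem~\ref{thm:GenOwen}, where $(c_1^{1/2}+c_2^{1/2})^2\le 2(c_1+c_2)$ is applied. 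The paper therefore re-enters the proof at the earlier decomposition \eqref{eq:bt}--\eqref{eq:MainB} and the intermediate estimate \eqref{eq:Bt0}: bounding $\sum_m\var(a_mb^m\hat I_m)\le N\Gamma_{t,s}^{(b)}\sigma^2$ and the cross term $2\sum_{m>n}\var^{1/2}\var^{1/2}\le 2c_b N\Gamma_{t,s}^{(b)}\sigma^2$ directly, one gets $(N^{-1}\sum_m\{\var\}^{1/2})^2\le \Gamma_{t,s}^{(b)}(1+2c_b)\sigma^2/N$, and then \eqref{eq:bt} yields the perfect-square form of \eqref{eq:B1}. So the issue you diagnosed for \eqref{eq:B2} already arises for \eqref{eq:B1}.

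For \eqref{eq:B2} you are right that one must look inside the proof, but the specific simplifications are: (i) when $t=0$ the residual set $\tilde P$ is empty, so the $b^{2t}\sigma^2/N^2$ term in \eqref{eq:bt}--\eqref{eq:adt} vanishes entirely, and the bound above collapses to $e(1+2c_b)\sigma^2/N$; (ii) $c_b=(b-1)^{1/2}/(b^{1/2}-1)$ is maximized at $b=2$ with $c_2=1/(\sqrt2-1)=1+\sqrt2$ (not $\sqrt2$), whence $1+2c_b\le 3+2\sqrt2$.
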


\begin{proof}

To prove the error rate, let $\tilde{\sigma}_{u,l}^2=\sum_{\kappa:|\kappa|=l}\sigma_{u,\kappa}^2$ for $l\in\mathbb{N}$ and  note that, for any fixed integers $a>0$, $0\leq c<k-s$ and $u\subseteq\mathcal{S}$,
\begin{align*}
b^{-a(k-c-|u|)} \sum_{|\kappa|\leq k-c-|u|} \sigma_{u,\kappa}^2 b^{a|\kappa|}&=b^{-a(k-c-|u|)} \sum_{l=0}^{k-c-|u|} b^{al}\,\tilde{\sigma}_{u,l}^2 =b^{-a(k-c-|u|+1)} \sum_{l=1}^{k-c-|u|+1} b^{al}\,\tilde{\sigma}_{u,l-1}^2
\end{align*}
which converges to zero by Kronecker's lemma. Also, because $\sum_{|u|>0}\sum_{\kappa}\sigma_{u,\kappa}^2=\sigma^2$, this shows that, as $N\rightarrow \infty$, $B^{(k)}_{t}\cvz $ and $B_{t+1}^{(k)}\cvz $ and therefore, using Theorem \ref{thm:GenOwen}, as $N\rightarrow\infty$, $N\var\big(I(\tilde{P}^{N},\varphi)\big)\cvz$. The proof of the bounds given in \eqref{eq:B1}-\eqref{eq:B2} is postponed to  \ref{app:B}.
\end{proof}

When $N\leq b^t$, the  trivial bound $\var\big(I(\tilde{P}^{N},\varphi)\big)\leq\sigma^2$ is sharper than the bounds given in \eqref{eq:basic} and in \eqref{eq:B1}. To compare these latter when $N> b^t$, note that, for all $b\geq 2$, $(1+2c_b)<b(b+1)/(b-1)$. Thus, the bound in  \eqref{eq:B1}  is sharper that the one provided in \eqref{eq:basic}  for any $N>b^t N_{s}^{(b)}$ with
\begin{align}\label{eq:NS}
 N_{s}^{(b)}=\bigg\{\bigg(\frac{b+1}{b-1}\bigg)^s\bigg[\bigg(b\frac{b+1}{b-1}\bigg)^{1/2}-(1+2c_b)^{1/2}\bigg]^2\bigg\}^{-1}.
\end{align}
Simple computations show that $N_{s}^{(b)}$ decreases as $b\geq 2$  and/or  $s\geq 1$ increases and,  for $b\geq 5$,  $N_{s}^{(b)}<1$ for all $s\geq 2$.

Table \ref{Table:Nb} below gives the value of $\max(1,N_{s}^{(b)})$ for different prime numbers $b\geq 2$ and for dimension $s>b$ so that a $(0,s)$-sequence in base $b$ does not exist \citep[see][Corollary 4.36, p.141]{dick2010digital}; that is, for values of $s$ such that the bound given in \eqref{eq:B2} cannot apply. As one may expect, remark that the bound in \eqref{eq:B1} is larger than the one given in \eqref{target} for quadratures based on $(\lambda,t,m,s)$-nets.

\begin{table}[H]
\centering
\begin{tabular}{c|cccccc}
$s$&3&4&5&6&7&$>8$\\
\hline
$\max(1,N_{s}^{(2)})$&29.77&9.93&3.31&1.11&1&1\\
$\max(1,N_{s}^{(3)})$&-&1.05&1&1&1&1\\
$\max(N_{s}^{(5)},1)$&-&-&-&1&1&1\\
$\max(N_{s}^{(>7)},1)$&-&-&-&-&-&1\\
\end{tabular}
\caption{Value of $\max(N_{s}^{(b)},1)$ for different prime numbers  $b\geq 2$ and for $s>b$, with $N_s^{(b)}$ defined in \eqref{eq:NS}.\label{Table:Nb}}
\end{table}

Finally it is worth mentioning that the $\smallo_P(N^{-1/2})$ convergence rate for quadratures based on scrambled nets of arbitrary size was simultaneously established by Art B. Owen (personal communication) using a more direct proof. Nevertheless,  the bound given in Theorem \ref{thm:GenOwen} also allows to study situations where the integrand  depends on the size of the quadrature rule $N$, as explained in  Section \ref{sec:size}.

\subsection{Error rate for smooth integrands}\label{subsec_smooth}

In a recent paper, \citet[][Theorem 2]{Owen2014}  established that the best possible  rate for the variance we can have uniformly on $N$ is  $N^{-2}$. In this subsection we show that,  under some smoothness assumptions on $\varphi$,  this optimal rate is achieved by scrambled net quadrature rules  in the sense that there exist constants $\underline{c}<\bar{c}<\infty$ such that, for $N$ large enough,
$$
\underline{c}< N^2\var\left(I(\tilde{P}^N,\varphi)\right)< \bar{c}.
$$
More precisely, we focus  on functions $\varphi\in L_2\ui^s$  such that $\sigma_{u,\kappa}^2=\bigO(b^{-2|\kappa|})$ for all $u\subseteq\mathcal{S}$. Note that this condition is fulfilled when, e.g.,  $\varphi$ has continuous mixed partial derivative of order $s$ \citep[][Lemma 2]{Owen2008} or when $\varphi$ satisfies the generalized Lipschitz condition considered in \citet{Yue1999a}. 

For such integrands $\varphi$, simple computations yield the following result:

\begin{prop}\label{prop}
Consider the set-up of Theorem \ref{thm:GenOwen} and assume that $\sigma_{u,\kappa}^2=\bigO(b^{-2|\kappa|})$ for all $u\subseteq\mathcal{S}$. Then,
$$
\var\left(I(\tilde{P}^N,\varphi)\right)=\bigO(N^{-2}).
$$
\end{prop}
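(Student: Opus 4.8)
The plan is to use the same block decomposition of the first $N$ points of a $(t,s)$-sequence that underlies the proof of Theorem~\ref{thm:GenOwen}, but to insert the stronger decay $\sigma_{u,\kappa}^2=\bigO(b^{-2|\kappa|})$ at each step. Write $N=\sum_{r=0}^{k}\lambda_r b^r$ in base $b$, with $\lambda_r\in\{0,\dots,b-1\}$ and $b^k\leq N<b^{k+1}$. Then $\tilde P^N$ splits into the blocks $\tilde P_r$ of $N_r=\lambda_r b^r$ consecutive points for each $r\geq t$ with $\lambda_r\geq 1$, plus a final block $\tilde P_\star$ of $N_\star=\sum_{r<t}\lambda_r b^r<b^t$ points. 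By the nesting property of $(t,s)$-sequences recalled in Section~\ref{sec:background}, each $\tilde P_r$ is (after scrambling) a scrambled $(\lambda_r,t,r,s)$-net, since it is a block of $\lambda_r b^r$ points starting at a multiple of $b^{r+1}$. Since every scrambled point is marginally uniform, $\E[I(\tilde P_r,\varphi)]=\E[I(\tilde P_\star,\varphi)]=I(\varphi)$, so the triangle inequality in $L^2$ (which needs no independence between the blocks) gives
$$
\sqrt{\var\big(I(\tilde P^N,\varphi)\big)}\ \leq\ \sum_{r\geq t,\ \lambda_r\geq 1}\frac{N_r}{N}\sqrt{\var\big(I(\tilde P_r,\varphi)\big)}\ +\ \frac{N_\star}{N}\sqrt{\var\big(I(\tilde P_\star,\varphi)\big)}.
$$

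\textbf{Bounding each block.} For the final block I would use the trivial bound $\var(I(\tilde P_\star,\varphi))\leq\sigma^2$; together with $N_\star<b^t$ this makes the last term $\bigO(1/N)$. For a net block $\tilde P_r$, apply \eqref{res} together with $\Gamma_{u,\kappa}\leq\Gamma^{(b)}_{t,s}$ to obtain $\var(I(\tilde P_r,\varphi))\leq\frac{\Gamma^{(b)}_{t,s}}{N_r}\sum_{|u|>0}\sum_{|\kappa|>r-t-|u|}\sigma_{u,\kappa}^2$. Inserting $\sigma_{u,\kappa}^2\leq C b^{-2|\kappa|}$ and grouping the multi-indices with $|\kappa|=l$ (there are $\binom{l+|u|-1}{|u|-1}$ of them, a polynomial in $l$ of degree $|u|-1$), the double sum is bounded, for each of the finitely many $u\subseteq\mathcal{S}$, by a tail $C\sum_{l>r-t-|u|}\mathrm{poly}(l)\,b^{-2l}$, which is of order $\mathrm{poly}(r)\,b^{-2r}$ by the elementary geometric-tail estimate and $|u|\leq s$. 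Hence $\var(I(\tilde P_r,\varphi))=\bigO\big(\mathrm{poly}(r)\,b^{-2r}/N_r\big)=\bigO\big(\mathrm{poly}(r)\,b^{-3r}\big)$, using $N_r=\lambda_r b^r\geq b^r$.

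\textbf{Assembling.} Then $\frac{N_r}{N}\sqrt{\var(I(\tilde P_r,\varphi))}=\bigO\big(\frac{\lambda_r b^r}{N}\,\mathrm{poly}(r)\,b^{-3r/2}\big)=\bigO\big(\frac{1}{N}\mathrm{poly}(r)\,b^{-r/2}\big)$, and summing over $t\leq r\leq k$ the series $\sum_{r\geq t}\mathrm{poly}(r)\,b^{-r/2}$ converges; so the whole right-hand side of the displayed inequality is $\bigO(1/N)$, and squaring yields $\var(I(\tilde P^N,\varphi))=\bigO(N^{-2})$. (The finitely many $N<b^t$ are trivially covered since then $\sigma^2\leq b^{2t}\sigma^2/N^2$.)

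\textbf{Main obstacle.} There is no serious obstacle — this is why the statement says ``simple computations''. The only points requiring care are invoking the block decomposition with exactly the scrambled-$(\lambda,t,r,s)$-net structure used in the proof of Theorem~\ref{thm:GenOwen}, and checking that the polynomial factors produced by counting multi-indices with a fixed $|\kappa|$ are absorbed by the geometric decay $b^{-2|\kappa|}$; both are routine once the decomposition is written down.
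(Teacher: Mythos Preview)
Your proof is correct and follows essentially the same route as the paper's: the same base-$b$ block decomposition into scrambled $(\lambda_r,t,r,s)$-nets plus a remainder of fewer than $b^t$ points, Minkowski's inequality, and the convergence of the series $\sum_r \mathrm{poly}(r)\,b^{-r/2}$. The only cosmetic differences are that the paper cites the tail bound $\sum_{|u|>0}\sum_{|\kappa|>m-t-|u|}\sigma_{u,\kappa}^2=\bigO(b^{-2m}m^{s-1})$ from \citet{Owen1998,Owen2008} rather than re-deriving it by counting multi-indices, and it folds the blocks with $t\leq r<t+s-1$ into the remainder instead of treating all $r\geq t$ uniformly.
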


\begin{proof}[Proof of Proposition \ref{prop}.] 

First, note that under the assumption of the proposition,  \citet[][Theorem 2]{Owen1998} and \citet[][Theorem 3]{Owen2008} show that, for $m\geq t+s-1$,
\begin{equation}\label{eq:YM}
\sum_{|u|>0}\,\sum_{|\kappa|>m-t-|u|}\sigma_{u,\kappa}^2=\bigO( b^{-2m}m^{s-1}).
\end{equation}
Then, let $N\geq b^{t+s-1}$ and $k$ be the largest integer such that $N\geq b^k$.  The standard way to analyse the variance of a scrambled net  quadrature rule of arbitrary size  is to decompose $\tilde{P}^N$ into scrambled $(a_m,t,m,s)$-nets $\tilde{P}_m$, $m=t,\dots,k$, and a remaining set $\tilde{P}$ that contains $\tilde{n}<b^t$ points (see the proof of Theorem \ref{thm:GenOwen} for more details). Let $\tilde{P}'=\tilde{P}\cup_{m=t}^{t+s-2}\tilde{P}_m$. Then, using trivial inequalities and the convention that empty sums are null, we have
$$
\var\bigg(I(\tilde{P}^N,\varphi)\bigg)\leq \frac{1}{N^2}\bigg(\Big\{\var\Big(\sum_{\tilde{\bx}^n\in \tilde{P}'}\varphi(\tilde{\bx}^n)\Big)\Big\}^{1/2}
+\sum_{m=t+s-1}^k\Big\{\var\Big(\sum_{\tilde{\bx}^n\in \tilde{P}_m}\varphi(\tilde{\bx}^n)\Big)\Big\}^{1/2}
\bigg)^{2}.
$$
Let  $\tilde{I}\subset\{0,\dots,N-1\}$ be such that $n\in \tilde{I}$ if and only if $\tilde{\bx}^n\in \tilde{P}$. Then,
$$
\var\Big(\sum_{\tilde{\bx}^n\in \tilde{P}'}\varphi(\tilde{\bx}^n)\Big)\leq \Big(\sum_{n\in \tilde{I}}\Big\{\var\Big(\varphi(\tilde{\bx}^n)\Big)\Big\}^{1/2}\Big)^2=|\tilde{P}'|^2\sigma^2< (b^{t+s-1})^2\sigma^2.
$$
In addition,  using \eqref{res} and \eqref{eq:YM},
\begin{align*}
\sum_{m=t+s-1}^{k}\Big\{\var\Big(\sum_{\tilde{\bx}^n\in \tilde{P}_m}\varphi(\tilde{\bx}^n)\Big)\Big\}^{1/2}&=\bigO\Big(\sum_{m=t+s-1}^k b^{-m/2}m^{\frac{s-1}{2}}\Big).
\end{align*}
To conclude the proof, note that the serie $\sum_{m=t+s-1}^{\infty} b^{-m/2}m^{\frac{s-1}{2}}$ is convergent.
\end{proof}

We conclude this subsection with two remarks. First, and as in \citet[][Theorem 2]{Owen1998}, the computations in the proof of Proposition \ref{prop} hold for $N\geq b^{t+s}$ and thus we cannot expect that the variance decreases as $N^{-2}$   for smaller quadrature sizes. Second, the rate of order $\bigO_P(N^{-2}(\log N)^{s-1})$ found by \citet{Yue1999a}, under the same assumptions as in Proposition \ref{prop}, is due to the fact that, in the last step of the proof of this latter, they use the inequality $\sum_{m=t+s-1}^k b^{-m/2}m^{(s-1)/{2}}<k^{(s-1)/{2}}(1-b^{-1/2})^{-1}$ rather than using the fact that the series  $\sum_{m=t+s-1}^{\infty} b^{-m/2}m^{(s-1)/{2}}$ is convergent.

\section{Error rate for integrands that depend on the quadrature size}\label{sec:size}

We now analyse the behaviour of the quadrature $I (\tilde{P}^N,\varphi_N )$ where $(\varphi_N)_{N\geq 1}$ is a sequence of real valued functions. In practice, the sequence  of functions $(\varphi_N)_{N\geq 1}$ is often such that, as $N\rightarrow\infty$, $\varphi_N\rightarrow \varphi$   where $I(\varphi)$ is the quantity of interest. The classical situation where this set-up  occurs is when we are estimating $I(\varphi)$ using a sequential method such as the array-RQMC algorithm developed by \citet{LEcuyer2006} or  the SQMC algorithm proposed by \citet{SQMC}. 

Using Theorem \ref{thm:GenOwen}, we can deduce the following result concerning the error size of the quadrature rule $I(\tilde{P}^N,\varphi_N)$.

\begin{cor}\label{cor:phiN}
Consider the set-up of Theorem \ref{thm:GenOwen}. Let $(\varphi_N)_{N\geq 1}$ be a sequence of  functions such that, $\forall N\in\mathbb{N}^*$, $\varphi_N\in L^2\ui^s$, and  for $N\geq 1$, let 
$$
\sigma_N^2=\int_{\ui^s}\Big(\varphi_N(\bx)-\int_{\ui^s}\varphi_{N}(\mathbf{v})\dd\mathbf{v}\Big)^2\dx,\quad \sigma_{N}^2=\sum_{|u|>0}\,\sum_{\kappa}\sigma^2_{N,u,\kappa}.
$$
Assume that, for any $u\subseteq\mathcal{S}$ and for any $\kappa(u)$, we have, as $N\rightarrow \infty$,  $\sigma^2_{N,u,\kappa}\rightarrow \sigma^2_{u,\kappa}$  and $\sigma_{N}^2\rightarrow \sigma^2<\infty$, where $\sigma^2=\sum_{|u|>0}\,\sum_{\kappa}\sigma^2_{u,\kappa}$. Then,
$$
\var\big(I(\tilde{P}^{N},\varphi_N)\big)=\smallo(N^{-1}).
$$

\end{cor}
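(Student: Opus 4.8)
The plan is to apply Theorem~\ref{thm:GenOwen} to the function $\varphi_N$ for each fixed $N$, so that
$$
N\,\var\big(I(\tilde P^N,\varphi_N)\big)\leq 2\Gamma^{(b)}_{t,s}\Big\{(1+c_b)B^{(k)}_{N,t}+c_b\big[B^{(k)}_{N,t+1}+R^{(k)}_N\big]\Big\}+b^{2t}\frac{\sigma_N^2}{N},
$$
where $B^{(k)}_{N,c}$ and $R^{(k)}_N=\sum_{|u|>0}b^{-(k-1-t-|u|)/2}\sum_{|\kappa|\le k-1-t-|u|}b^{|\kappa|/2}\sigma^2_{N,u,\kappa}$ are the quantities appearing in the theorem but with $\sigma^2_{u,\kappa}$ replaced by $\sigma^2_{N,u,\kappa}$, and $k=k(N)$ satisfies $b^k\le N<b^{k+1}$. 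Since $\sigma_N^2\to\sigma^2<\infty$, the last term $b^{2t}\sigma_N^2/N\to 0$ trivially, so it suffices to show each of $B^{(k)}_{N,t}$, $B^{(k)}_{N,t+1}$ and $R^{(k)}_N$ tends to $0$ as $N\to\infty$. The extra difficulty compared with Corollary~\ref{cor:smallN} is that both the index $k$ and the summands $\sigma^2_{N,u,\kappa}$ now move with $N$, so Kronecker's lemma cannot be applied directly; I expect this double dependence to be the main obstacle.

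To handle it I would use a domination argument. For $l\in\mathbb N$ write $\tilde\sigma^2_{N,u,l}=\sum_{\kappa:|\kappa|=l}\sigma^2_{N,u,\kappa}$ and $\tilde\sigma^2_{u,l}=\sum_{\kappa:|\kappa|=l}\sigma^2_{u,\kappa}$; by the pointwise convergence hypothesis and Fatou's lemma, $\sum_{|u|>0}\sum_{l}\tilde\sigma^2_{u,l}\le\liminf\sigma_N^2=\sigma^2$, and combined with $\sum_{|u|>0}\sum_l\tilde\sigma^2_{u,l}\le\sigma^2$ one in fact gets equality and hence $\sigma_N^2=\sum_{|u|>0}\sum_l\tilde\sigma^2_{N,u,l}\to\sum_{|u|>0}\sum_l\tilde\sigma^2_{u,l}$ with all terms convergent; this is precisely the setting in which an $L^1$-type Scheffé/Vitali argument applies. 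Concretely, for any fixed truncation level $L$ and any fixed finite collection of subsets, $\sum_{|u|>0,\,|u|\le s}\sum_{l\le L}\tilde\sigma^2_{N,u,l}\to\sum\sum\tilde\sigma^2_{u,l}$, while the tail $\sum_{|u|>0}\sum_{l>L}\tilde\sigma^2_{N,u,l}=\sigma_N^2-\sum_{|u|>0}\sum_{l\le L}\tilde\sigma^2_{N,u,l}$ converges to $\sigma^2-\sum_{|u|>0}\sum_{l\le L}\tilde\sigma^2_{u,l}$, which can be made arbitrarily small by choosing $L$ large. This gives uniform (in $N$) smallness of tails, which is what is needed to pass the $N$-dependent sums to the limit.

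With this in hand the three quantities are dispatched as follows. For $B^{(k)}_{N,c}$ (with $c=t$ or $t+1$), split it at a fixed threshold $|\kappa|\le L$ versus $|\kappa|>L$: the ``high-frequency'' part $\sum_{|u|>0}\sum_{|\kappa|>k-c-|u|}\sigma^2_{N,u,\kappa}$ is bounded for $k$ large by the uniformly small tail $\sum_{|u|>0}\sum_{l>L}\tilde\sigma^2_{N,u,l}$ once $k-c-s\ge L$; the ``low-frequency'' part $\sum_{|u|>0}b^{-(k-c-|u|)}\sum_{|\kappa|\le k-c-|u|}b^{|\kappa|}\sigma^2_{N,u,\kappa}$ is split again: the contribution of levels $l\le L$ is at most $b^{L-(k-c-s)}\sum_{|u|>0}\sum_{l\le L}\tilde\sigma^2_{N,u,l}\le b^{L-(k-c-s)}\cdot\mathrm{const}$, which $\to0$ as $k\to\infty$ with $L$ fixed, while the contribution of levels $L<l\le k-c-|u|$ is bounded by $\sum_{|u|>0}\sum_{l>L}\tilde\sigma^2_{N,u,l}$, again uniformly small. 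The term $R^{(k)}_N$ is controlled the same way, noting $b^{-(k-1-t-|u|)/2}b^{|\kappa|/2}=b^{-(k-1-t-|u|-|\kappa|)/2}\le 1$ for $|\kappa|\le k-1-t-|u|$, so $R^{(k)}_N\le\sum_{|u|>0}\sum_{|\kappa|\le k-1-t-|u|}\sigma^2_{N,u,\kappa}\le\sigma_N^2$, which is not yet enough, but the finer split — levels $l\le L$ contributing $\le b^{-(k-1-t-s-L)/2}\sum_{l\le L}\tilde\sigma^2_{N,u,l}\to0$ and levels $l>L$ contributing $\le\sum_{|u|>0}\sum_{l>L}\tilde\sigma^2_{N,u,l}$ uniformly small — closes it. Choosing $L$ first to make all tails $<\varepsilon$ uniformly in $N$, then $k$ (hence $N$) large enough to kill the geometric prefactors, yields $N\var(I(\tilde P^N,\varphi_N))\to0$, i.e. $\var(I(\tilde P^N,\varphi_N))=\smallo(N^{-1})$.
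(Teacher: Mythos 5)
Your argument is correct, and it reaches the conclusion by a route that differs in its mechanics from the paper's. The paper applies Theorem \ref{thm:GenOwen} to $\varphi_N$ exactly as you do, but then controls the weighted sums $b^{-ak'}\sum_{l}\tilde{\sigma}_{N,u,l-1}^2 b^{al}$ by Abel summation (mimicking the proof of Kronecker's lemma), with a truncation point $\tilde{k}=\lfloor k'/2\rfloor$ that \emph{moves} with $k$, and then invokes Fatou's lemma to show that the residual mass $\sigma_N^2-\sum_{|u|>0}S^N_{u,\tilde{k}}$ vanishes. You instead fix a truncation level $L=L(\varepsilon)$ in advance and use a Scheff\'e-type argument: pointwise convergence $\sigma^2_{N,u,\kappa}\to\sigma^2_{u,\kappa}$ together with convergence of the total masses $\sigma_N^2\to\sigma^2=\sum_{|u|>0}\sum_{\kappa}\sigma^2_{u,\kappa}$ forces the tails $\sum_{|u|>0}\sum_{l>L}\tilde{\sigma}^2_{N,u,l}$ to be eventually small, after which the low-frequency contributions are killed by the geometric prefactors $b^{L-(k-c-s)}$ (and their square-root analogue for the third term). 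Both proofs are at bottom uniform-integrability arguments over the countable index set $(u,\kappa)$ and both need Fatou at the same place; your version avoids the summation-by-parts bookkeeping and makes the $\varepsilon$--$L$--$N_0$ structure of the argument more transparent, at the cost of not producing an explicit intermediate bound such as \eqref{eq:InN} that quantifies the rate. One cosmetic point: what you call ``uniform in $N$'' smallness of the tails is really ``eventual'' smallness (for $N\ge N_0(L)$), since $L$ is chosen before $N_0$; this is all that is needed here because the claim is a limit statement, but the phrasing should be adjusted.
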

\begin{proof}
Let $k\in\mathbb{N}$ be the largest power of $b$ such that $b^k\leq N$. Then, by Theorem \ref{thm:GenOwen}, to prove the result we first need to show that, for  $a\in\{\frac{1}{2},1\}$ and $c\in\{t,t+1\}$, we have
$$
\sum_{|u|>0}b^{-a(k-c-|u|)} \sum_{|\kappa|\leq k-c-|u|} \sigma_{N,u,\kappa}^2 b^{a|\kappa|}=\smallo(1).
$$
To establish this result, let $a$ and $c$ be as above, $k\geq t+s+2$, $k'=k-c-|u|+1\geq2$, $\tilde{k}=\lfloor k'/2\rfloor\geq 1$ and $S_{u,p}^N=\sum_{l=1}^{p+1}\tilde{\sigma}_{N,u,l-1}^2$ where $\tilde{\sigma}^2_{N,u,l}$ is defined as in the proof of Corollary \ref{cor:smallN}. Note that the positive and increasing sequence $(S_{u,p}^N)_{p\geq 1}$  converges to $\sigma_{N,u}^2=\sum_{|\kappa|>0}\sigma^2_{N,u,\kappa}$ as $p\rightarrow  \infty$. Then, using summation by part and similar computations as in the proof of Kronecker's lemma \citep[see, e.g.,][Lemma 2, p.390]{Probability}, we have
\begin{align*}
b^{-a(k-c-|u|)} \sum_{l=0}^{k-c-|u|}\tilde{\sigma}_{N,u,l}^2 b^{al}&=b^{-a k'} \sum_{l=1}^{k'}\tilde{\sigma}_{N,u,l-1}^2 b^{al}\\
%&=S^N_{u,k'} -\frac{1}{b^{a k'}} \sum_{l=1}^{k'-1}(b^{a(l+1)}-b^{al})S^{N}_{u,l}\\
&=S^N_{u,k'} - b^{-a k'}\sum_{l=1}^{\tilde{k}-1}(b^{a(l+1)}-b^{al})S^{N}_{u,l}
- b^{-ak'}\sum_{l=\tilde{k}}^{k'-1}(b^{a(l+1)}-b^{al})\sigma_{N,u}^2\\
&-b^{-ak'} \sum_{l=\tilde{k}}^{k'-1} (b^{a(l+1)}-b^{al})(S^{N}_{u,l}-\sigma_{N,u}^2)\\
&\leq \frac{b^{a\tilde{k}}}{b^{ak'}}\sigma_{N,u}^2+ (\sigma_{N,u}^2-S^{N}_{u,\tilde{k}})
\end{align*}
so that (recall that $\tilde{k}$ and $k'$ depend on $|u|$)
\begin{align}\label{eq:InN}
\sum_{|u|>0}b^{-a(k-c-|u|)} \sum_{|\kappa|\leq k-c-|u|} \sigma_{N,u,\kappa}^2 b^{a|\kappa|}\leq  \sum_{|u|>0}b^{a(\tilde{k}-k')}\sigma_{N,u}^2+ \Big(\sigma_{N}^2-\sum_{|u|>0}S^{N}_{u,\tilde{k}}\Big).
\end{align}
Then, using Fatou's Lemma, 
\begin{align*}
0\leq \limsup_{k\rightarrow \infty}\Big(\sigma_{N}^2-\sum_{|u|>0}S^{N}_{u,\tilde{k}}\Big)&=\limsup_{k\rightarrow \infty} \Big(\sigma_{N}^2-\sum_{|u|>0}\sum_{l\geq 0}\mathbb{I}(l\leq \tilde{k})\tilde{\sigma}^2_{N,u,l}\Big)\\
&\leq \sigma^2-\liminf_{k\rightarrow \infty} \sum_{|u|>0}\sum_{l\geq 0}\mathbb{I}(l\leq \tilde{k})\tilde{\sigma}^2_{N,u,l}\\
&\leq \sigma^2- \sum_{|u|>0}\sum_{l\geq 0}\liminf_{k\rightarrow \infty}\mathbb{I}(l\leq \tilde{k})\tilde{\sigma}^2_{N,u,l}\\
&=0
\end{align*}
because each $\tilde{\sigma}^2_{N,u,l}$ is a finite sum of some $\sigma^2_{N,u,\kappa}$'s and, by assumption, $\sigma^2_{N,u,\kappa}\rightarrow \sigma^2_{u,\kappa}$ for any $u$ and $\kappa$. This shows that the second term of \eqref{eq:InN} converges to zero as $N\rightarrow \infty$. The above computations also show that, for any $u\subseteq\mathcal{S}$, $\sigma^2_{N,u}$ converges to $\sum_{|\kappa|>0}\sigma^2_{u,\kappa}$  so that $b^{a(\tilde{k}-k')}\sigma_{N,u}^2\rightarrow 0$ as $N\rightarrow \infty$. Hence, the right-hand side of \eqref{eq:InN} goes to zero as $N$ increases, as required. To conclude the proof note that these  computations also imply that, as $N\rightarrow\infty$, $\sum_{|u|>0}\sum_{|\kappa|\leq k-c-|u|} \sigma_{N,u,\kappa}^2\cvz$.

\end{proof}

\subsection{Application of Corollary \ref{cor:phiN} to SQMC and to sampling importance resampling}

A direct consequence of Corollary \ref{cor:phiN} is to relax the constraint on $N$ in  \citet[][Theorem 7]{SQMC}, showing that on the class of continuous and bounded functions SQMC asymptotically outperforms standard sequential Monte Carlo algorithms without any restriction on how the number of simulations (or ``particles'')  grows. 

Providing a complete description of SQMC is beyond the scope of this work (see however Section  \ref{sec:exSS} for an example of SQMC algorithm). Nevertheless, to get some insight about how  Corollary \ref{cor:phiN}  applies to this class of methods, we illustrate this result by studying  a scrambled net version of the  sampling importance resampling (SIR) algorithm proposed by \citet{Rubin1987,Rubin:SIR}, which is iteratively used in SQMC. In addition,  and as already mentioned, the steps used to establish the convergence rate of the latter (Proposition \ref{prop:SIR} below) are exactly the same that those needed to extend \citet[][Theorem 7]{SQMC} to an arbitrary pattern of $N$.

%To keep the presentation simple we consider a  SIR algorithm  designed to estimate the univariate expectation
%$\pi(f):=\int_{\ui} f(x)\pi(x)\dd x$,
%with $\pi$  a density function on $\ui$; see  Algorithm \ref{algo:SIR}  for the pseudo-code version of proposed algorithm. In Algorithm \ref{algo:SIR},  $q(x)\dd x$ is a proposal distribution on $\ui$ and, for a probability measure $\mu$ on $\ui$, $F^{-1}_{\mu}$  denotes the (generalized) inverse of $F_{\mu}$, the cumulative density function (CDF) of $\mu$.  Note that the extension of Algorithm \ref{algo:SIR} for the  estimation of multivariate expectations is not trivial because in Step \ref{Step2} we can no longer invert the empirical CDF. We refer the reader to  \citet{SQMC} for  a multivariate version of SIR based on scrambled nets. 

SIR algorithms are designed to estimate the expectation
$\pi(f):=\int_{\ui^s} f(\bx)\pi(\bx)\dx$,
with $\pi$  a density function on $\ui^s$; see  Algorithm \ref{algo:SIR}  for the pseudo-code version of the proposed QMC version of SIR. In Algorithm \ref{algo:SIR},  $q(\bx)\dx$ is a proposal distribution on $\ui^s$ and, for a probability measure $\mu$ on $\ui^s$, $\bm{F}^{-1}_{\mu}:\ui^s\rightarrow\ui^s$  denotes the (generalized) inverse of $\bm{F}_{\mu}$, the Rosenblatt transformation of $\mu$ \citep[see][for a definition]{Rosenblatt1952}. Finally, $h: [0,1]^s\rightarrow [0,1]$ is a pseudo-inverse of the Hilbert space filling curve $H:[0,1]\rightarrow [0,1]^s$, which is a continuous  mapping from the unit interval onto the unit hypercube \citep[see, e.g.,][for how to construct the Hilbert curve for any $s\geq 2$]{Hamilton2008b}.

Using Corollary \ref{cor:phiN} and the results in \citet{SQMC}, we can prove that the error of Algorithm \ref{algo:SIR} to approximate $\pi(f)$ is of size $\smallo_P(N^{-1/2})$ for any pattern of $N$, as shown in the next result.

\begin{algorithm}
\caption{QMC sampling importance resampling \label{algo:SIR}}
\begin{algorithmic}[1]
\State Generate $\tilde{P}_1^N=\{\tilde{\bx}_1^n\}_{n=0}^{N-1}$ a scrambled net in $\ui^s$ and $\tilde{P}_2^N=\{\tilde{x}_2^n\}_{n=0}^{N-1}$ a scrambled net in $\ui$
\State Compute $\bz_1^n=\bm{F}_q^{-1}(\tilde{\bx}_1^n)$ and $w^n=\pi(\bz_1^n)/q(\bz_1^n)$ for $n=0,\dots,N-1$
\State\label{Step2}Compute $I(\tilde{P}_2^N,\varphi_N)$ where $\varphi_N=f\circ F_{h,N}^{-1}$ with $F^{-1}_{h,N}$ the (generalized) inverse of the empirical cumulative density function $F_{h,N}(z)=\sum_{n=0}^{N-1} \frac{w^n}{\sum_{m=0}^{N-1} w^m}\mathbb{I}(h(\bz_1^n)\leq z)$
\State \textbf{return} $I(\tilde{P}_2^N,\varphi_N)$, an estimate of $\pi(f)$
\end{algorithmic}
\end{algorithm}

\begin{prop}\label{prop:SIR}
Consider Algorithm \ref{algo:SIR} where $\tilde{P}_1$ and $\tilde{P}_2$ are independent. Assume that the functions $f(\bz)$ and $\pi(\bz)/q(\bz)$ are continuous and bounded on $\ui^s$  and that, for all $i\in\mathcal{S}$, the $i$-th component of $\bm{F}_q^{-1}$ is continuous on $\ui^i$. Then, $\varphi_N\rightarrow \varphi:=f\circ F^{-1}_{\pi_h}$ almost surely,  with $\pi_h$  the image by $h$ of $\pi$ and with $I(\varphi)=\pi(f)$. In addition,
$$
\var\Big(I(\tilde{P}_2^N,\varphi_N)\Big)=\smallo_P(N^{-1}),\quad \E|I(\tilde{P}_2^N,\varphi_N)-\pi(f)|=\smallo_P(N^{-1/2}).
$$
\end{prop}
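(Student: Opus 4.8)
The plan is to reduce both displays to Corollaries~\ref{cor:phiN} and~\ref{cor:smallN} by conditioning on the scrambled net $\tilde P_1^N$, exactly as one would relax the constraint on $N$ in \citet[][Theorem~7]{SQMC}. The first task is to establish that $\varphi_N\to\varphi$ almost surely. Since the randomisation of \citet{Owen1995} preserves the net structure almost surely, $\tilde P_1^N$ is almost surely a low-discrepancy point set, so its empirical measure converges weakly to the uniform law on $\ui^s$. As every component of $\bm F_q^{-1}$ is continuous, the empirical measure of the $\bz_1^n=\bm F_q^{-1}(\tilde\bx_1^n)$ then converges weakly to $q(\bx)\dx$ and, $\pi/q$ being continuous and bounded, the self-normalised weighted empirical measure $\widehat\pi_N=\sum_n\frac{w^n}{\sum_m w^m}\delta_{\bz_1^n}$ converges weakly to $\pi(\bx)\dx$, almost surely. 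Pushing forward by the continuous map $h$ gives $F_{h,N}\to F_{\pi_h}$ at every continuity point of $F_{\pi_h}$, hence the generalised inverses converge Lebesgue-almost everywhere on $\ui$, and composing with the continuous $f$ gives $\varphi_N\to\varphi:=f\circ F^{-1}_{\pi_h}$ Lebesgue-almost everywhere, almost surely; the identity $I(\varphi)=\pi(f)$ follows because $F^{-1}_{\pi_h}$ pushes the Lebesgue measure on $\ui$ forward to $\pi$. These are the arguments of \citet{SQMC} and I would simply invoke them. Because $|\varphi_N|\le\|f\|_\infty$ uniformly in $N$, dominated convergence then yields, almost surely, $\sigma^2_{N,u,\kappa}\to\sigma^2_{u,\kappa}$ for every $u$ and $\kappa$ and $\sigma^2_N\to\sigma^2<\infty$, so the hypotheses of Corollary~\ref{cor:phiN} hold along almost every realisation of the random sequence $(\varphi_N)_{N\ge1}$.

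\emph{The variance.} Working conditionally on $\tilde P_1^N$, $\varphi_N$ is a fixed element of $L^2\ui$ and $\tilde P_2^N$ is an independent scrambled net with $s=1$; on the probability-one event of the first step, Corollary~\ref{cor:phiN} gives $\var\big(I(\tilde P_2^N,\varphi_N)\mid\tilde P_1^N\big)=\smallo(N^{-1})$, which is the first display. Since the bound of Corollary~\ref{cor:smallN} shows this conditional variance is at most $\sigma_N^2\le\|f\|_\infty^2$ times a bounded factor over $N$, bounded convergence also upgrades this to the unconditional $\var\big(I(\tilde P_2^N,\varphi_N)\big)=\smallo(N^{-1})$.

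\emph{The error.} I would split $I(\tilde P_2^N,\varphi_N)-\pi(f)=\big(I(\tilde P_2^N,\varphi_N)-I(\varphi_N)\big)+\big(I(\varphi_N)-\pi(f)\big)$, where $I(\varphi_N)=\int_{\ui}\varphi_N(z)\,\dd z=\widehat\pi_N(f)$, the self-normalised importance sampling estimate of $\pi(f)$, equals $\E\big(I(\tilde P_2^N,\varphi_N)\mid\tilde P_1^N\big)$ by the marginal uniformity of the points of $\tilde P_2^N$ and their independence from $\tilde P_1^N$. By the variance step and Jensen's inequality the first bracket has $L^1$-norm at most $\{\E\var(I(\tilde P_2^N,\varphi_N)\mid\tilde P_1^N)\}^{1/2}=\smallo(N^{-1/2})$. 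For the second bracket I would write $\widehat\pi_N(f)=A_N/B_N$ with $A_N=I\big(\tilde P_1^N,(f\pi/q)\circ\bm F_q^{-1}\big)$ and $B_N=I\big(\tilde P_1^N,(\pi/q)\circ\bm F_q^{-1}\big)$; the integrands are continuous and bounded on $\ui^s$, and a change of variables through $\bm F_q^{-1}$ gives $\E A_N=\pi(f)$ and $\E B_N=1$. Corollary~\ref{cor:smallN} gives $\var(B_N)=\smallo(N^{-1})$ and, since $A_N-\pi(f)B_N$ is a scrambled-net quadrature of a continuous bounded function integrating to zero, $\var(A_N-\pi(f)B_N)=\smallo(N^{-1})$ as well; a Chebyshev bound on $\P(B_N\le\tfrac12)$ together with the uniform bound $|\widehat\pi_N(f)-\pi(f)|\le 2\|f\|_\infty$ then gives $\widehat\pi_N(f)-\pi(f)=\smallo_P(N^{-1/2})$, indeed $\E|\widehat\pi_N(f)-\pi(f)|=\smallo(N^{-1/2})$. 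Combining the two brackets with the triangle inequality yields the second display.

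The main obstacle is the first step: obtaining $\varphi_N\to\varphi$ in a mode strong enough (pointwise a.e.\ together with the uniform bound $\|f\|_\infty$) to carry the convergence of the ANOVA components through, and then legitimising the use of the \emph{deterministic}-sequence Corollary~\ref{cor:phiN} along almost every realisation of the random sequence $(\varphi_N)$. The ratio-estimator manipulation for $\widehat\pi_N(f)$ in the error step is the only other point that needs care; everything else follows mechanically from Corollaries~\ref{cor:phiN} and~\ref{cor:smallN}.
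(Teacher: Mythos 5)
Your overall strategy is the paper's: condition on $\tilde{P}_1^N$, verify almost surely the hypotheses of Corollary \ref{cor:phiN} for the random sequence $(\varphi_N)$, and control the importance-sampling part separately; the only substantive difference is that where the paper simply cites \citet[][Theorem 2]{SQMC} for the behaviour of the self-normalised estimate $\widehat\pi_N(f)$, you re-derive it via the ratio decomposition $A_N/B_N$ with a Chebyshev bound on $\P(B_N\le\tfrac12)$. That re-derivation is sound and arguably more self-contained.

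There is, however, one concrete slip in your variance paragraph. The law of total variance gives
\begin{equation*}
\var\big(I(\tilde{P}_2^N,\varphi_N)\big)=\E\Big[\var\big(I(\tilde{P}_2^N,\varphi_N)\,\big|\,\mathcal{F}_1^N\big)\Big]+\var\Big(\E\big[I(\tilde{P}_2^N,\varphi_N)\,\big|\,\mathcal{F}_1^N\big]\Big),
\end{equation*}
and bounded convergence applied to the a.s.\ conditional statement from Corollary \ref{cor:phiN} controls only the first term; you cannot "upgrade" to the unconditional variance without also showing $\var(\widehat\pi_N(f))=\smallo(N^{-1})$, since $\E[I(\tilde{P}_2^N,\varphi_N)\mid\mathcal{F}_1^N]=\widehat\pi_N(f)$ is random. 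This is the term the paper disposes of by Corollary \ref{cor:smallN} together with \citet[][Theorem 2]{SQMC}. The gap is repairable with material you already have: your ratio argument, pushed to second moments (on $\{B_N>\tfrac12\}$ bound $|\widehat\pi_N(f)-\pi(f)|\le 2|A_N-\pi(f)B_N|$, and on the complement use the uniform bound together with $\P(B_N\le\tfrac12)\le 4\var(B_N)=\smallo(N^{-1})$), yields $\E[(\widehat\pi_N(f)-\pi(f))^2]=\smallo(N^{-1})$ and hence the missing variance bound. You should state this explicitly in the variance step rather than only the $L^1$ conclusion in the error step. The remainder of the argument (the a.s.\ convergence of $\varphi_N$, the transfer of the ANOVA coefficients by dominated convergence using $|\varphi_N|\le\|f\|_\infty$, and the triangle-inequality/Jensen treatment of the $L^1$ error) matches the paper and is correct.
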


\begin{proof}
To show that $\varphi_N\rightarrow \varphi$ almost surely, note first that, as $N\rightarrow \infty$, $\|F_{h,N}-F_{\pi_h}\|_{\infty}\cvz$  almost surely by \citet[][Theorem 1 and Theorem 3]{SQMC} and therefore, for all $x\in\ui$, $F_{h,N}^{-1}(x)\rightarrow F^{-1}_{\pi_h}(x)$ with probability one \citep[see the proof of][Theorem 7]{SQMC}. Then, $\varphi_N\rightarrow\varphi$ because $f$ is continuous.

To show the second part of the proposition, let $\mathcal{F}_1^N$ be the $\sigma$-algebra generated by $\{\bz_1^n\}_{n=0}^{N-1}$ and note that
\begin{align}
\var\Big(I(\tilde{P}_2^N,\varphi_N)\Big)&=\E\left[\var\Big(I(\tilde{P}_2^N,\varphi_N)\big|\mathcal{F}_1^N\Big)\right]+\var\left(\E\Big[I(\tilde{P}_2^N,\varphi_N)\big|\mathcal{F}_1^N\Big]\right)\notag\\
&=\E\left[\var\Big(I(\tilde{P}_2^N,\varphi_N)\big|\mathcal{F}_1^N\Big)\right]+\var\left(\pi_N(f)\right)\label{eq:intSIR}
\end{align}
where $\pi_N(f)=\sum_{n=0}^{N-1}w^nf(\bz_1^n)/(\sum_{m=0}^{N-1} w^m)$. 

The second term after the second equality sign is $\smallo(N^{-1})$ by  Corollary \ref{cor:smallN} and by \citet[][Theorem 2]{SQMC}. For the first term, the same  computations as in the proof of \citet[][Theorem 7]{SQMC} show that the sequence $(\varphi_N)_{N\geq 1}$ verifies with probability one the assumptions of  Corollary \ref{cor:phiN}. Thus, using this latter,
$$
\var\Big(I(\tilde{P}_2^N,\varphi_N)\big|\mathcal{F}_1^N\Big)=\smallo(N^{-1}),\quad\text{almost surely}
$$ 
and therefore the first  term in \eqref{eq:intSIR} is $\smallo(N^{-1})$ by the Dominated Convergence theorem. This shows that $\var\Big(I(\tilde{P}_2^N,\varphi_N)\Big)=\smallo(N^{-1})$. Finally, to establish the result for the $L_1$-norm, note that 
\begin{align*}
 \E|I(\tilde{P}_2^N,\varphi_N)-\pi(f)|&\leq  \E|I(\tilde{P}_2^N,\varphi_N)-\pi_N(f)|+ \E|\pi_N(f)-\pi(f)|\\
&\leq  \left\{\var\Big(I(\tilde{P}_2^N,\varphi_N)\Big)\right\}^{1/2}+ \E|\pi_N(f)-\pi(f)|
\end{align*}
where the first term after the second inequality sign is $\smallo(N^{-1/2})$ from the above computations while the second term is $\smallo(N^{-1/2})$ by Corollary \ref{cor:smallN} and by \citet[][Theorem 2]{SQMC}.
\end{proof}

\section{Numerical Study}\label{sec:num}

In this section we illustrate the main findings of this paper. All the simulations presented below rely on a Sobol' sequence that is scrambled using the method proposed by \citet{Owen1995}. We recall  that $b=2$ for the Sobol' sequence.  

\subsection{Scrambled  net quadrature rules}\label{sec:num1}

We  consider the problem of estimating the $s$-dimensional integral $I(\varphi_j)$, $j=1,\dots,4$, where
\begin{align*}
\varphi_{1}(\bx)=\sum_{i=1}^s x_i,\quad \varphi_{2}(\bx)=\max\Big(\sum_{i=1}^sx_i-\frac{s}{2}, 0\Big),\quad \varphi_{3}(\bx)=\mathbb{I}_{\left(\sum_{i=1}^s x_i>\frac{s}{2}\right)}(\bx)
\end{align*}
are as in \citet{He2014} and where $\varphi_{4}(\bx)=12^{s/2}\prod_{i=1}^s(x_i-0.5)$ is as in \citet{Owen1997b,Owen1998}. Note that the integrands $\varphi_{1}$ and $\varphi_{2}$ are both Lipschitz continuous but $\varphi_{2}$ is not everywhere differentiable, while $\varphi_{4}$ satisfies the assumption of Proposition \ref{prop} \citep{Owen1997b}.  For $j\in\{1,\dots, 4\}$, we estimate the integral $I(\varphi_{j})$  using  the quadrature rule $I(\tilde{P}^N,\varphi_{j})$ where, as mentioned above, $\tilde{P}^N$ is the set containing the first $N$ points of a scrambled Sobol' sequence. 

Figure \ref{Fig:num1} shows  the evolution of the  mean square errors (MSEs) as a function of $N$. Results are presented for $N$  ranging from $1$ to $2^{18}$, with $s=3$ for $\varphi_j$, $j=1,2,3$, and $s=6$ for $\varphi_4$. In addition to the MSEs, we have reported the Monte Carlo $N^{-1}$ reference line to illustrate the result of Corollary \ref{cor:smallN}, namely that the convergence rate is faster than $N^{-1}$ for any pattern of $N$. To illustrate the finding of Proposition \ref{prop}, we have also represented a $N^{-2}$ reference line in the plot showing the results for the quadrature $I(\tilde{P}^N,\varphi_{4})$ (Figure \ref{Fig:phi4}). This $N^{-2}$ reference line starts at $N=b^{t+s}=2^{14}$, which is the value of $N$ from which we can naively  expect that the quadrature $I(\tilde{P}^N,\varphi_4)$ enters in the asymptotic regime (see Section \ref{subsec_smooth}).

%two lines which decrease as $N^{-1}$ until $N=b^{t+s}=2^4$, and then one start decreasing as $N^{-2}$ while the other  start decreasing as $N^{-2}(\log N)^{s-1}$, where $N=b^{t+s}$ is the value of $N$ from which we can expect that the quadrature $I(\tilde{P}^N,\varphi_4)$ enters in the asymptotic regime \citep{Owen1998}. A shown in Proposition \ref{prop}, the first of these two line corresponds crudely to the behaviour of the MSE  for the quadrature $I(\tilde{P}^N,\varphi_4)$.

%To compare quadrature rules based on nets of arbitrary size with those based on $(\lambda,t,m,s)$-nets, Figure \ref{Fig:num1} also shows the evolution of the MSEs along the subsequence $N=2^m$. The interesting point to note here is that the advantage of using $(\lambda,t,m,s)$-nets over nets of arbitrary size decreases as the integrand becomes ``less smooth''. Indeed, for the everywhere differentiable and Lipschitz functions $\varphi_1$ and $\varphi_4$, it is clear  from Figure \ref{Fig:num1} that there is no gain to take $N\neq 2^m$ since we can observe that the cheapest way to reach any given level of MSE is to select for $N$ a  powers 2. For the function $\varphi_2$ the same observation hold only for $N\geq 2^5$ and the gain of using $(\lambda,t,m,s)$-nets is clearly smaller than for the estimation of $I(\varphi_1)$. Finally, the advantage of taking a powers of 2 for the quadrature size has completely disappeared for the discontinuous function $\varphi_3$. 

To compare quadrature rules based on nets of arbitrary size with those based on $(t,m,s)$-nets, Figure \ref{Fig:num1} also shows the evolution of the MSEs along the subsequence $N=2^m$. The interesting point to note here is that, for a given value of $s$, the advantage of using $(t,m,s)$-nets over nets of arbitrary size decreases as the integrand becomes ``less smooth''. Indeed, for the everywhere differentiable and Lipschitz function $\varphi_1$, we observe that taking for $N$ powers of 2 significantly improves the convergence rate. In addition, this choice for the quadrature size  $N$  is also the cheapest way to reach any given level of MSE. For the function $\varphi_2$ this observation holds for $N\geq 2^8$ but  the gain in term of convergence rate  is  smaller than for the estimation of $I(\varphi_1)$. Finally, the advantage of taking a power of 2 for the quadrature size has completely disappeared for the discontinuous function $\varphi_3$. 

To understand these observations recall that, by Proposition \ref{prop}, the  (asymptotic) convergence rate of  $\var\big(I(\tilde{P}^N,\varphi)\big)$ is  $N^{-2}$  uniformly on $N$ when $\varphi$ is  smooth enough so that the quantities $\sigma_{u,\kappa}^2$'s decrease sufficiently quickly as $|\kappa|$ increases. However, under the conditions of Proposition \ref{prop}, the error size of quadratures based on scrambled $(t,m,s)$-nets is of order $\bigO(N^{-3/2}(\log N)^{(s-1)/2}$ \citep[][Theorem 3]{Owen2008} and is thus smaller than what is obtained for an arbitrary value of $N$. More generally, and  as illustrated in Figure \ref{Fig:num1}, the error size of quadratures based on scrambled $(\lambda, t,m,s)$-nets depends positively on the smoothness of the integrand \citep[for more theoretical results on this point,  see][]{Owen1997b,Owen1998, Yue1999a, Hickernell2001}.  Consequently, taking $N=b^k$ is   the best choice for the smooth integrands $\varphi_1$ and $\varphi_2$   since then the MSE  goes to zero much faster than $N^{-2}$. Note that for $\varphi_2$ the MSE obtained by taking $N= 2^k$ decreases slower than for  $\varphi_1$ and, as a result, $N$ should be larger to rule out the choice $N\neq  b^k$. Finally, for the discontinuous function $\varphi_3$ the convergence rate of the MSE when using $(t,m,s)$-nets is too slow for the choice of  $N$ to influence that of the MSE.

\begin{figure}
\begin{center}
\begin{subfigure}{0.4\textwidth}
\begin{center}
\includegraphics[trim = 5cm 2cm 2.5cm 0,scale=0.3]{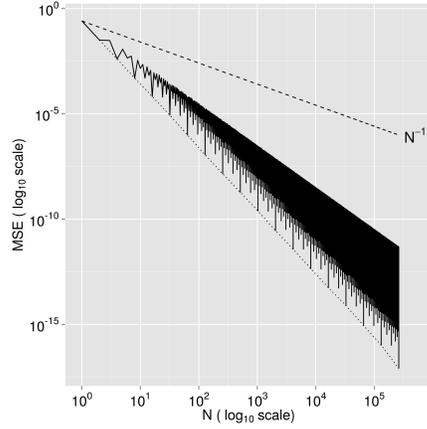}
\caption{\label{Fig:phi1}}
\end{center}
\end{subfigure}
\hspace{1.2cm}
\begin{subfigure}{0.4\textwidth}
\begin{center}
\includegraphics[trim = 5cm 2cm 2.5cm 0,scale=0.3]
{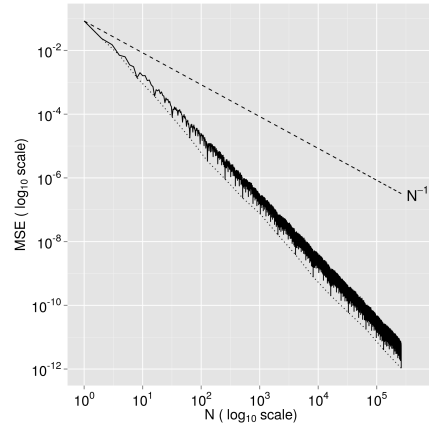}
\caption{\label{Fig:phi2}}
\end{center}
\end{subfigure}

\begin{subfigure}{0.4\textwidth}
\begin{center}
\includegraphics[trim = 5cm 2cm 2.5cm 0,scale=0.3]{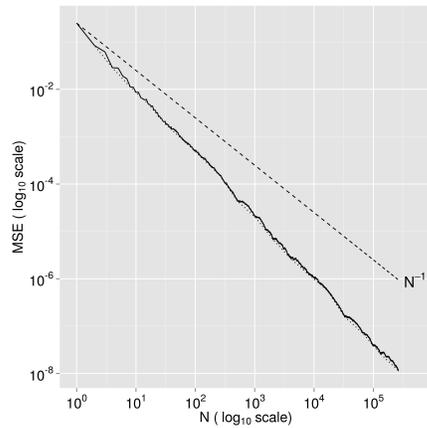}
\caption{\label{Fig:phi3}}
\end{center}
\end{subfigure}
\hspace{1.2cm}
\begin{subfigure}{0.4\textwidth}
\begin{center}
\includegraphics[trim = 5cm 2cm 2.5cm 0,scale=0.3]
{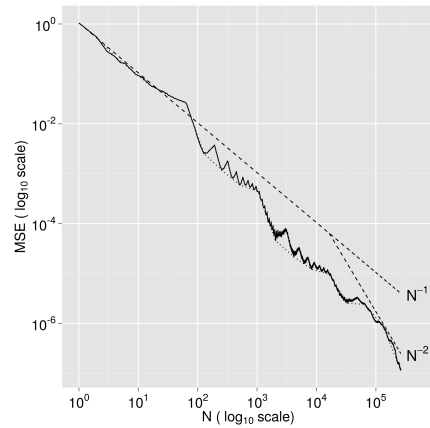}
\caption{\label{Fig:phi4}}
\end{center}
\end{subfigure}
\end{center}
\caption{Mean square error  of $I(\tilde{P}^N,\varphi_1)$ (Figure \ref{Fig:phi1}), $I(\tilde{P}^N,\varphi_2)$ (Figure \ref{Fig:phi2}), $I(\tilde{P}^N,\varphi_3)$ (Figure \ref{Fig:phi3}) and $I(\tilde{P}^N,\varphi_4)$  where $\tilde{P}^N$ contains the first $N$ points of a scrambled Sobol' sequence. The  dotted lines present the results along the subsequence $N=2^m$ for $m=0,\dots,18$ and the solid lines the MSEs for any $N\in\{1,\dots 2^{16}\}$. In Figure \ref{Fig:phi4}, the $N^{-2}$ reference line starts at $N=b^{t+s}=2^{14}$. The results are obtained from 1\,000 independent repetitions.\label{Fig:num1}}
\end{figure}

\subsection{Likelihood function estimation in state space models}\label{sec:exSS}

We now study the problem of estimating the likelihood function of the following generic univariate state space model
\begin{equation}\label{eq:SS}
\begin{cases}
y_k|z_k\sim\mathcal{N}\left(\mu_{y}(z_k),\sigma_{y}^2(z_k)\right),& k\geq 0\\
z_k|z_{k-1}\sim \mathcal{N}\left(\mu_{z}(z_{k-1}), \sigma_{z}^2(z_{k-1})\right),&k\geq 1\\
z_0\sim\mathcal{N}(\mu_0,\sigma_0^2)
\end{cases}
\end{equation}
where $(y_k)_{k\geq 0}$ is the observation process, $(z_k)_{k\geq 0}$ is the hidden Markov process and where $\mu_{q}:\mathbb{R}\rightarrow\mathbb{R}$ and $\sigma_{q}:\mathbb{R}\rightarrow\mathbb{R}^+$, $q\in\{z,y\}$, are known functions. 

Given a set of $T\geq 1$ observations $\{y_k\}_{k=0}^{T-1}$, we denote by $p(y_{0:T-1})$ the likelihood  function of the model defined by (\ref{eq:SS}), which cannot be computed explicitly.  Indeed, writing $f(\cdot{ },\mu,\sigma^2)$  the density function of the $\mathcal{N}(\mu,\sigma^2)$ distribution, it is easy to see that (using the convention that $f\big(z_k,\mu_z(z_{k-1}),\sigma^2_z(z_{k-1})\big)=f\big(z_0,\mu_0,\sigma_0^2\big)$ when $k=0$)
\begin{equation}\label{eq:intSS}
p(y_{0:T-1})=\int_{\mathbb{R}^T}\prod_{k=0}^{T-1} f\big(y_k,\mu_{y}(z_{k}),\sigma_{y}^2(z_{k})\big)f\big(z_k,\mu_z(z_{k-1}),\sigma^2_z(z_{k-1})\big)\dd z_k
\end{equation}
where, in practical scenarios, the time horizon $T$ is large (at least several dozen). In addition, simple unweighed  quadrature rules  are generally very inefficient to evaluate the integral appearing in \eqref{eq:intSS}. To see this, note that $p(y_{0:T-1})=I(\varphi_T)$ where $\varphi_T:\ui^T\rightarrow \mathbb{R}$ is given by 
$$
\varphi_T(x_0,\dots,x_{T-1})=\tilde{\varphi}_{T}\circ F^{-1}_{T}(x_0,\dots,x_{T-1})
$$
with $\tilde{\varphi}_{T}(z_0,\dots,z_{T-1})=\prod_{k=0}^{T-1}f(y_k,\mu_{y}(z_k),\sigma^2_{y}(z_k))$ and
$F_{T}$ the Rosenblatt transformation   of the probability measure on $\mathbb{R}^{T}$ defined by
$$
\prod_{k=0}^{T-1} f\big(z_k,\mu_{z}(z_{k-1}),\sigma_{z}^2(z_{k-1})\big)\dd z_k.
$$ 
Because $T$ is typically large, the function $\varphi_T$ is concentrated in a tiny region of the integration domain and, consequently,   quadrature rules  require a huge number of points to provide a precise estimate of $p(y_{0:T-1})$. An efficient way to get an (unbiased) estimate $p^N(y_{0:T-1})$ of $p(y_{0:T-1})$ is to use a SQMC algorithm \citep{SQMC}; that is, a QMC version of sequential Monte Carlo methods which are standard tools to handle this kind of problems  \citep[see, e.g.,][]{DouFreiGor}. The suitable SQMC algorithm for the generic state space model (\ref{eq:SS}) is presented in Algorithm \ref{alg:SQMC_SS}, where we use the standard notation  $\Phi(\cdot)$ for the cumulative density function (CDF) of the $\mathcal{N}(0,1)$ distribution. Note that inference in state space model (\ref{eq:SS}) is just an example of problems that can be addressed using SQMC and, in particular,  SQMC is not restricted to Gaussian models.

To see the connection between the results presented in Sections \ref{sec:res}-\ref{sec:size}  and  Algorithm \ref{alg:SQMC_SS}, note that the likelihood function $p(y_{0:T-1})$ can be decomposed as follows:
$$
p(y_{0:T-1})=\prod_{k=0}^{T-1}p(y_k|y_0,\dots,y_{k-1})
$$
with the convention that $p(y_k|y_0,\dots,y_{k-1})=p(y_0)$ when $k=0$.  Then, Algorithm \ref{alg:SQMC_SS} amounts to recursively computing an approximation of the form $N^{-1}\sum_{n=1}^Nw_k^n=I(\varphi_{N,k}, \tilde{P}_k^N)$
of the incremental likelihood $p(y_k|y_0,\dots,y_{k-1})$, $k=0,\dots,T-1$. 

At iteration $k=0$, $\varphi_{N,k}$ is the  function $f(y_0,\mu_y(\cdot),\sigma_y^2(\cdot))$ which therefore does not depend on $N$. Thus, iteration 0 of Algorithm \ref{alg:SQMC_SS} is a simple scramble net quadrature rule  which enters in the framework of Section \ref{sec:res}. For $k\geq 1$, it is easy to see that
$$
\varphi_{N,k}(\bx)=f\left(y_k,\mu_y\circ g(\bx),\sigma^2_y\circ g(\bx)\right),\quad g(\bx)=\mu_z\circ F_{N,k-1}^{-1}(x_1)+\sigma_z\circ F_{N,k-1}^{-1}(x_1)\Phi^{-1}(x_2)
$$
and thus, for $k\geq 1$, we are in the set-up of Section \ref{sec:size} where the integrand depends on the quadrature size $N$.

\begin{algorithm}
\caption{SQMC Algorithm to estimate $p(y_{0:T-1})$ in the state space model (\ref{eq:SS}) \label{alg:SQMC_SS}}
\begin{algorithmic}[1]
\State\label{state:RQMC1}Generate a RQMC point set $\tilde{P}^N_0=\{\tilde{x}_{0}^{n}\}_{n=0}^{N-1}$ in $\ui$
\State Compute $z_{0}^{n}=\mu_0+\sigma_0\Phi^{-1}(\tilde{x}_{0}^{n})$ and $w_0^n=f\left(y_0,\mu_{y}(z_0^n),\sigma^2_{y}(z_0^n)\right)$, $n=0,\dots, N-1$
\State Normalize the weights: $W_{0}^{n}=w_0^n/\sum_{m=0}^{N-1} w_0^m$, $n=0,\dots, N-1$

\State Compute $p^N(y_{0})=N^{-1}\sum_{n=0}^{N-1} w_0^n$
\For{$k=1\rightarrow T-1$}
\State\label{state:RQMC2} Generate a RQMC point set $\tilde{P}_k^N=\{\tilde{\bx}^n_{k}\}_{n=0}^{N-1}$  in $\ui^{2}$;
let $\tilde{\bx}_{k}^{n}=(\tilde{x}_{k}^{n},\tilde{v}^n_k)$
%\State\label{state:label} Relabel the particles  such that $n< m\implies z_{k-1}^{n}\leq z_{k-1}^{m}$;
\For{$n=0\rightarrow N-1$}
\State\label{state:label} Compute $\tilde{z}_{k-1}^{n}=F_{N,k-1}^{-1}(\tilde{x}_k^n)$ where $F_{N,k-1}(z)=\sum_{m=0}^{N-1}W_{k-1}^m\mathbb{I}(z^m_{k-1}\leq z)$
\State Compute $z_{k}^{n}=\mu_{z}(\tilde{z}_{k-1}^{n})+\sigma_{z}
(\tilde{z}_{k-1}^{n})\Phi^{-1}(\tilde{v}_k^n)$
\State Compute $w_k^n=f\left(y_k,\mu_{y}(z_k^n),\sigma^2_{y}(z_k^n)\right)$
\EndFor
\State Normalize the weights: 
$W_{k}^{n}=w_k^n/\sum_{m=0}^{N-1} w_k^m$, $n=0,\dots,N-1$
\State Compute $p^{N}(y_{0:k})=p^N(y_{0:k-1})\,N^{-1}\sum_{n=0}^{N-1}
w_k^n$
\EndFor
\State \textbf{return} $p^N(y_{0:T-1})$, an estimate of $p(y_{0:T-1})$
\end{algorithmic}
\end{algorithm}

In this simulation study we analyse the MSE  of $\log p^N(y_{0:T-1})$ when at Step \ref{state:RQMC1} and at Step \ref{state:RQMC2} of Algorithm \ref{alg:SQMC_SS} the RQMC point sets are the first $N$ points of independent scrambled Sobol' sequences, where $N=4i$ for $i=3,\dots,2^{11}$. Note that, since the function $F_{N,k-1}^{-1}$ in the definition of $\varphi_{N,k}$ is discontinuous, the results of the previous subsection  suggest that the gain of restricting $N$ to be  powers of the Sobol' sequence can only be moderate in the context of SQMC. In addition, it is worth  remarking that this gain will also depend  on the regularity of the functions $\mu_{q}$ and $\sigma_{q}$, $q\in\{y,z\}$.

\subsubsection{Stochastic volatility (SV) model} 

We first consider the following simple univariate SV model
\begin{equation}\label{eq:SV}
\begin{cases}
y_k|z_k\sim\mathcal{N}\left(0,e^{-0.1+z_k}\right),& k\geq 0\\
z_k|z_{k-1}\sim \mathcal{N}\left(0.9z_{k-1},0.1\right),&k\geq 1\\
z_0\sim\mathcal{N}(0,\frac{0.1}{1-0.9^2})
\end{cases}
\end{equation}
from which a set of 100 observations is generated. Figure \ref{Fig:SV}  presents the MSE of the estimator  $\log p^N(y_{0:T-1})$ as well as the $N^{-1}$ Monte Carlo reference line. As expected from the results of Section \ref{sec:size}, we see that the $\smallo_P(N^{-1})$ convergence rate for the SQMC algorithm holds uniformly on $N$.  Nevertheless, we observe in this  example that selecting $N=2^m$ is optimal as soon as $N\geq 2^9$ in the sense that this choice  guarantees the smallest MSE for a given computational budget.

Interestingly, despite the discontinuities of the integrand we are facing at iteration $k\geq 1$ of SQMC, these first results look like those obtained for $\varphi_2$ in Section \ref{sec:num1} rather than like the ones we obtained for the discontinuous mapping $\varphi_3$. A possible explanation for this apparent contradiction is that the only source of discontinuities comes from the function $F_{N,k-1}^{-1}$. However, as $N$ increases, we can expect that the empirical CDF  of the particles generated at time $k\geq 0$ (and its inverse)  converges to a continuous function since  all the random variables  have a continuous distribution. Under some conditions we can show that this is indeed the case \citep[see, e.g., Proposition \ref{prop:SIR} above or the proof of Theorem 7 in][]{SQMC}. In addition, the SV model (\ref{eq:SV}) is an example of state space model  (\ref{eq:SS}) where the functions $\mu_{q}$ and $\sigma^2_{q}$, $q\in\{y,z\}$, are  very smooth.

\subsubsection{A non-linear and non-stationary model}

We now consider the following non-linear and non-stationary well known toy example in the particle filtering literature \citep[see, e.g.,][]{Gordon}
\begin{equation}\label{eq:NL}
\begin{cases}
y_k|z_k\sim\mathcal{N}\left(\frac{z_k^2}{20},1\right),& k\geq 0\\
z_k|z_{k-1}\sim \mathcal{N}\left(0.5z_{k-1}+25\frac{z_{k-1}}{1+z_{k-1}^2}+8\cos(1.2 k),10\right),&k\geq 1\\
z_0\sim\mathcal{N}(0,2)
\end{cases}
\end{equation}
from which we again simulate a set of 100 observations. Note that, in addition to the non-linearity of $\mu_z$, the density of the law of $y_k|x_k$ is bimodal when $y_k>0$. Due to these additional difficulties, we therefore expect that the gain of  restricting $N$ to be a power of 2  is less profitable than for the SV model. This point is confirmed in the  Figure \ref{Fig:NL}  where we show the evolution of the MSE as a function of $N$. We indeed remark from this plot that there is no gain of using a number of particles which is a power of two for the values of $N$ considered is this numerical study. However, and as for the SV model, we observe that SQMC converges faster the $N^{-1/2}$  Monte Carlo error rate.

To conclude this section it is worth  mentioning that to keep the presentation of SQMC simple we have only shown simulations for univariate models. In the multivariate version of SQMC, the resampling step of Algorithm \ref{alg:SQMC_SS} (Step \ref{state:label}) requires to sort the particles along a Hilbert space filling curve, as in Step \ref{Step2}  of the scrambled net SIR algorithm (Algorithm \ref{algo:SIR}). Since the Hilbert curve is $(1/d)$-H\"older continuous,  with $d$ the dimension of the state variable, the estimation problem becomes less smooth as $d$ increases. In light of the observations of this simulation study, this suggests that the gain of restricting $N$ to be  powers of the base of the underlying $(t,s)$-sequence is smaller than for univariate models. This point was confirmed in  non reported simulation study conducted for the bivariate version of the SV model (\ref{eq:SV}), where the gain of using $(t,m,s)$-nets  as input of SQMC has completely disappeared.

\begin{figure}
\centering
\begin{subfigure}{0.4\textwidth}
\begin{center}
\includegraphics[scale=0.35]{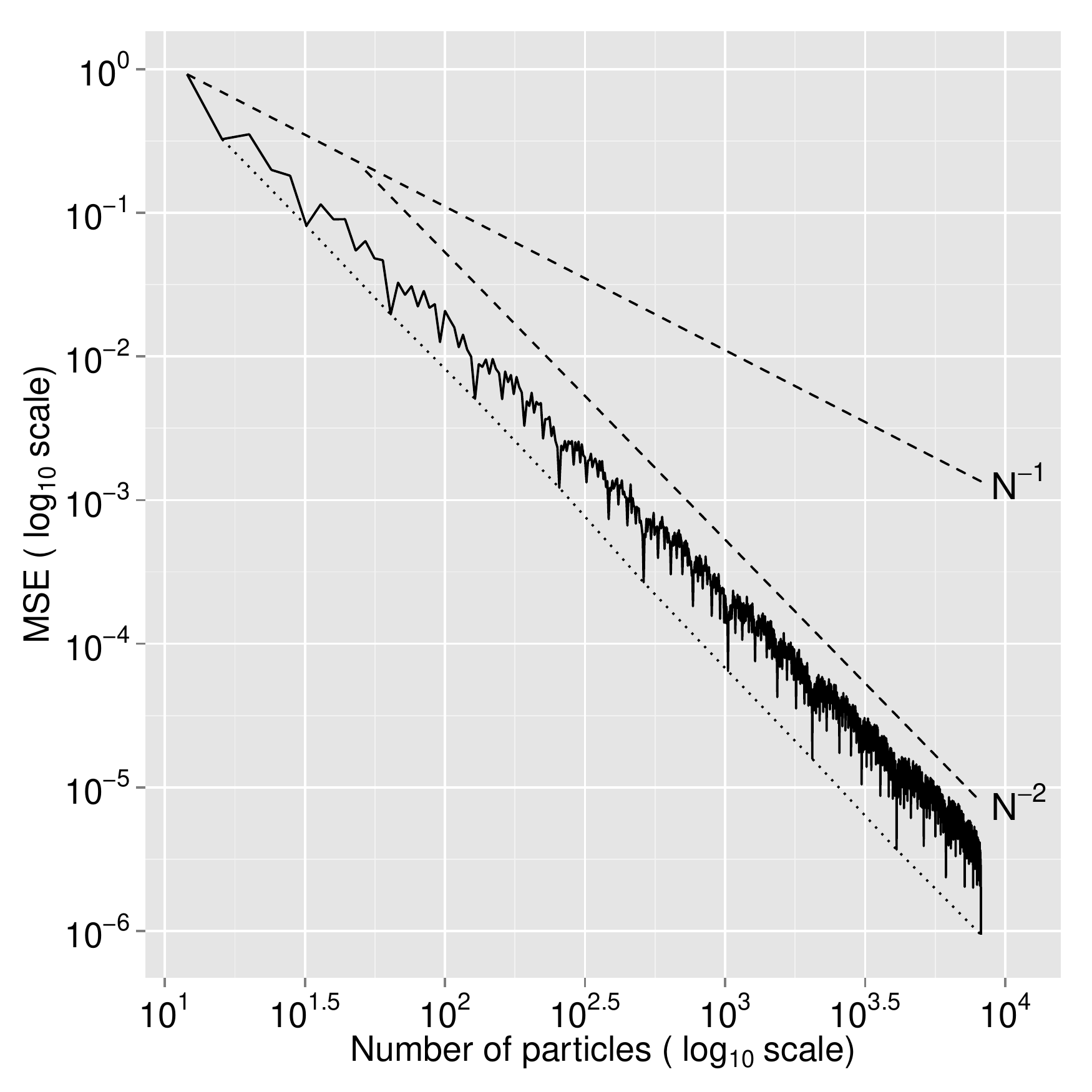}
\caption{\label{Fig:SV}}
\end{center}
\end{subfigure}
\hspace{1.2cm}
\begin{subfigure}{0.4\textwidth}
\begin{center}
\includegraphics[scale=0.35]{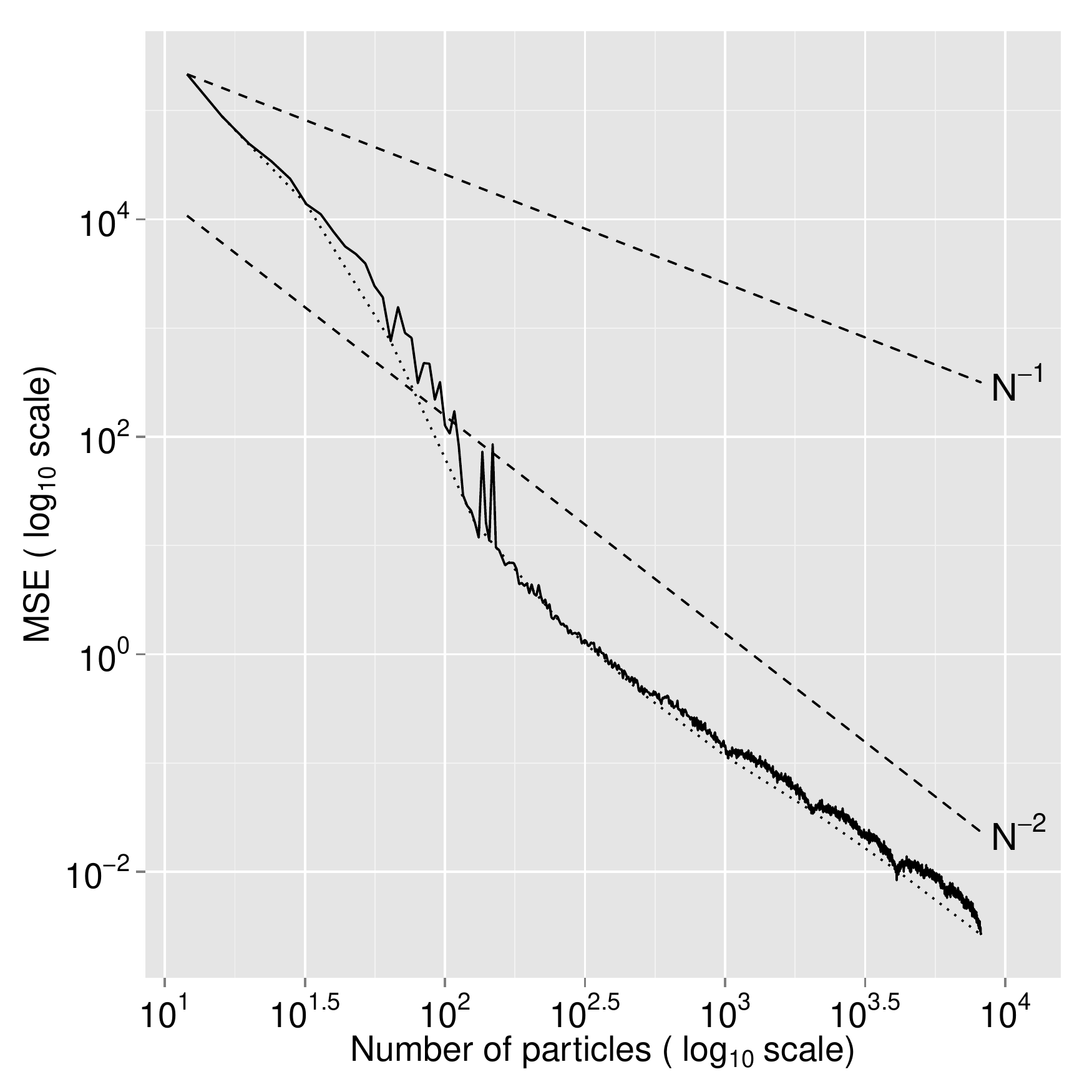}
\caption{\label{Fig:NL}}
\end{center}
\end{subfigure}
\caption{Mean square errors for the estimation of $\log p(y_{0:T-1})$ in the SV model (\ref{eq:SV}) (Figure \ref{Fig:SV}) and in the toy example (\ref{eq:NL}) (Figure \ref{Fig:NL}).  The dotted lines present the results for SQMC for $N=2^m$, $m=3,\dots,13$,  while the solid lines are for SQMC with $N=4i$, $i=3,\dots, 2^{11}$.   The results are obtained from 700 independent runs of Algorithm \ref{alg:SQMC_SS}.}
\end{figure}

\section{Conclusion}\label{sec:conc}

Together with the works of \citet{Yue1999a} and \citet{Hickernell2001}, the present analysis concludes to show that  the results of \citet{Owen1997a,Owen1997b,Owen1998} obtained for quadrature rules based on $(\lambda,t,s,m)$-nets are in fact true for quadrature rules based on the first $N$ points of scrambled $(t,s)$-sequences without any restriction on the pattern of $N$, namely, to sum-up:
\begin{enumerate}
\item For any square integrable functions the integration error goes to zero faster than for the classical Monte Carlo estimator;
\item\label{l2} For any square integrable functions the variance of scrambled quadrature rules is bounded by  the Monte Carlo variance multiplied by a constant  independent of the integrand;
\item The constant in \ref{l2}. is uniform with respect to the dimension for scrambled $(0,s)$-sequences;
\item For smooth integrands an explicit convergence rate (better than $N^{-1/2}$) can be computed \citep[see][]{Yue1999a, Hickernell2001}.
\end{enumerate} 
In a simulation study, we show that quadratures based on scrambled $(\lambda,t,m,s)$-nets outperform those based on nets of arbitrary size when the integrand $\varphi$ of interest is smooth. More precisely, using scrambled $(\lambda,t,m,s)$-nets is for such functions the fastest way to reach any given  level of MSE. Nevertheless, as  the integrand becomes less smooth, this gain decreases and completely disappears for discontinuous functions.

The second   important result proved in this paper is the asymptotic superiority of the sequential quasi-Monte Carlo algorithm proposed by \citet{SQMC} over standard sequential Monte Carlo methods without any restriction on how the number of particles grows. Since SQMC involves integration of discontinuous functions the behaviour of the MSE when the algorithm takes scrambled $(\lambda,t,m,s)$-nets as inputs should not be too different compared to what we would get when scrambled nets of arbitrary size are used. This point is illustrated in a simulation study based in two univariate state space models and we argue that for multivariate models it is very unlikely to expect any gain of using as input for SQMC only  points of scrambled sequences that form  $(\lambda,t,m,s)$-nets.

\section*{Acknowledgements}

I thank Nicolas Chopin, Art B. Owen, Florian Pelgrin and two anonymous referees for useful remarks that greatly  improve this paper. In addition, I am very grateful to Art B. Owen for having shared with me his  shorter proof for the first part of Corollary \ref{cor:smallN} he  derived when I was writing this manuscript.

\bibliographystyle{elsarticle-harv}
\bibliography{complete}

\appendix

\section{Proofs}

\subsection{Proof of Theorem \ref{thm:GenOwen}}\label{app:Thm}

We first prove the following  lemma that plays a key role in the proof of Theorem \ref{thm:GenOwen}. 
\begin{lem}\label{lemma:M} Let $b> 1$ (not necessary an integer), $k$ and $t$ be two integers such that $k\geq t\geq 0$ and $v_m\in [0,b-1]$, $m=0,...,k$. Then, 
\begin{multline}\label{eq:tsN}
 \sum_{m=t}^k v_mb^m\sum_{|u|>0}\, \sum_{|\kappa|>m-t-|u|}\sigma_{u,\kappa}^2\leq \sum_{m=t}^{k} v_{m} b^m\sum_{|u|>0}\sum_{|\kappa|>k-t-|u|}\sigma^2_{u,\kappa}\\
 +b^{k}\sum_{|u|>0}\frac{1}{b^{k-t-|u|}} \sum_{|\kappa|\leq k-t-|u|} \sigma_{u,\kappa}^2 b^{|\kappa|}
\end{multline}
where we use the convention that empty sums are null.
\end{lem}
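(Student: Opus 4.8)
The plan is to isolate, on the left-hand side, the ``diagonal'' contribution that already appears on the right-hand side and then to bound the remaining ``excess'' terms by rearranging the double sum over $m$ and $\kappa$. First I would split, for each fixed $m\in\{t,\dots,k\}$ and each $u$ with $|u|>0$, the inner sum $\sum_{|\kappa|>m-t-|u|}\sigma_{u,\kappa}^2$ as
$$
\sum_{|\kappa|>m-t-|u|}\sigma_{u,\kappa}^2=\sum_{|\kappa|>k-t-|u|}\sigma_{u,\kappa}^2+\sum_{m-t-|u|<|\kappa|\leq k-t-|u|}\sigma_{u,\kappa}^2,
$$
where the second sum is empty (hence null, by convention) when $m=k$, and is also empty whenever $m-t-|u|\geq k-t-|u|$. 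Substituting this back, the contribution of the first piece is exactly $\sum_{m=t}^k v_m b^m\sum_{|u|>0}\sum_{|\kappa|>k-t-|u|}\sigma_{u,\kappa}^2$, which is the first term on the right-hand side of \eqref{eq:tsN}. It therefore remains to show that
$$
\sum_{m=t}^{k-1} v_m b^m\sum_{|u|>0}\ \sum_{m-t-|u|<|\kappa|\leq k-t-|u|}\sigma_{u,\kappa}^2\ \leq\ b^{k}\sum_{|u|>0}\frac{1}{b^{k-t-|u|}}\sum_{|\kappa|\leq k-t-|u|}\sigma_{u,\kappa}^2 b^{|\kappa|}.
$$

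The key step is to swap the order of summation between $m$ and $\kappa$ on the left. Fix $u$ with $|u|>0$ and a multi-index $\kappa$ with $|\kappa|\leq k-t-|u|$; such a $\kappa$ contributes $v_m b^m\sigma_{u,\kappa}^2$ for exactly those $m$ with $t\leq m$ and $m-t-|u|<|\kappa|$, i.e.\ $t\leq m\leq |\kappa|+t+|u|-1$ (intersected with $m\leq k-1$, which is automatic since $|\kappa|+t+|u|-1\leq k-1$). Using $v_m\leq b-1$ and summing the geometric series,
$$
\sum_{m=t}^{|\kappa|+t+|u|-1} v_m b^m\ \leq\ (b-1)\sum_{m=t}^{|\kappa|+t+|u|-1} b^m\ =\ (b-1)\,\frac{b^{|\kappa|+t+|u|}-b^{t}}{b-1}\ <\ b^{|\kappa|+t+|u|}.
$$
Hence the left-hand side is bounded by $\sum_{|u|>0}\sum_{|\kappa|\leq k-t-|u|} b^{|\kappa|+t+|u|}\sigma_{u,\kappa}^2$, and pulling out $b^{k}$ and noting $b^{|\kappa|+t+|u|}=b^{k}\,b^{-(k-t-|u|)}b^{|\kappa|}$ gives exactly the right-hand side. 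Combining the two pieces yields \eqref{eq:tsN}.

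I expect the only real subtlety to be bookkeeping with the empty-sum convention, namely making sure that the range $m-t-|u|<|\kappa|\leq k-t-|u|$ is correctly interpreted (and empty) when $m-t-|u|\geq k-t-|u|$, i.e.\ when $m\geq k$, and that the outer index range after the swap, $t\leq m\leq |\kappa|+t+|u|-1$, is nonempty precisely when it should be. The geometric-series estimate $\sum_{m=t}^{M} v_m b^m<b^{M+1}$ is robust and does not require $b$ to be an integer, which is why the lemma can be stated for real $b>1$; this generality is presumably what is needed later when $b$ is replaced by $b^{1/2}$ in the proof of Theorem \ref{thm:GenOwen}.
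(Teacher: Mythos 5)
Your proof is correct and follows essentially the same route as the paper's: both hinge on interchanging the order of the $m$ and $\kappa$ summations and then bounding the resulting partial sum $\sum_{m\le |\kappa|+t+|u|-1} v_m b^m$ by the geometric-series estimate $(b-1)\sum b^m \le b^{|\kappa|+t+|u|}$, which indeed only needs $b>1$. The only (cosmetic) difference is that you split the tail $\sum_{|\kappa|>m-t-|u|}$ directly at level $k-t-|u|$, whereas the paper first complements against $\sigma^2=\sum_{|u|>0}\sum_{\kappa}\sigma^2_{u,\kappa}$ and then recombines, arriving at the same intermediate identity.
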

\begin{proof}

For $u\subseteq\mathcal{S}$ and for $l\in\mathbb{Z}$, let $\tilde{\sigma}^2_{u,l}= \sum_{\kappa: |\kappa|=l}\sigma^2_{u,\kappa}$ if $l\geq 0$ and $\tilde{\sigma}^2_{u,l}=0$ otherwise. To simplify the notations, let $k_t=k-t$ and $v_m'=v_{m+t}$.  Then,
\begin{align*}
 \sum_{m=t}^k v_m b^m\sum_{|u|>0}\, \sum_{|\kappa|>m-t-|u|}\sigma_{u,\kappa}^2=b^t \sum_{m=0}^{k_t} v'_{m} b^m\sum_{|u|>0}\, \sum_{l>m-|u|}\tilde{\sigma}_{u,l}^2.
\end{align*}
Let $N_t=\sum_{m=t}^{k} v_{m} b^m$ so that, using \eqref{eq:sigma}, we have
\begin{align}
 \sum_{m=t}^k v_m b^m\sum_{|u|>0}\, \sum_{|\kappa|>m-t-|u|}\sigma_{u,\kappa}^2= N_t\sigma^2-b^t\sum_{m=0}^{k_t} v'_{m}b^m \sum_{|u|>0}\,\sum_{l\leq m-|u|} \tilde{\sigma}_{u,l}^2.\label{eq:int}
\end{align}
 In order to study the second term  of \eqref{eq:int}, let $u\subseteq\mathcal{S}$ be such that $k_t\geq|u|$. Then,
\begin{align*}
\sum_{m=0}^{k_t} v'_{m}b^m \sum_{l\leq m-|u|} \tilde{\sigma}_{u,l}^2&= \sum_{m=|u|}^{k_t} v'_{m}b^m \,\sum_{l=0}^{m-|u|} \tilde{\sigma}_{u,l}^2=\sum_{m=0}^{k_t-|u|} v'_{m+|u|}b^{m+|u|} \,\sum_{l=0}^{m} \tilde{\sigma}_{u,l}^2.
\end{align*}
Since
$$
\sum_{m=0}^{k_t-|u|} v'_{m+|u|}b^{m+|u|}\,\sum_{l=0}^{m}\tilde{\sigma}^2_{u,l}=\sum_{l=0}^{k_t-|u|}\tilde{\sigma}_{u,l}^2\, \sum_{m=l}^{k_t-|u|} v'_{m+|u|}b^{m+|u|},
$$
with 
$
b^t\sum_{m=l}^{k_t-|u|} v'_{m+|u|}b^{m+|u|}= N_t-\sum_{m=0}^{l+t+|u|-1}v_mb^m$, we obtain
\begin{align*}
b^t\sum_{m=0}^{k_t} v'_m b^m \sum_{l=0}^{m-|u|}\tilde{\sigma}_{u,l}^2&=\sum_{l=0}^{k_t-|u|}\Big(N_t-\sum_{m=0}^{l+t+|u|-1}v_mb^m\Big)
\tilde{\sigma}_{u,l}^2.
\end{align*} 
Therefore, using \eqref{eq:int} and the convention that empty sums are null,  
\begin{align*}
 \sum_{m=t}^k v_mb^m\sum_{|u|>0}\, \sum_{|\kappa|>m-t-|u|}\sigma_{u,\kappa}^2&= N_t\sigma^2 -\sum_{|u|>0}\,\sum_{l=0}^{k_t-|u|}\Big(N_t-\sum_{m=0}^{l+t+|u|-1}v_mb^m\Big)\tilde{\sigma}_{u,l}^2\\
&=  N_t\sum_{|u|>0}\,\sum_{l>k_t-|u|}\tilde{\sigma}^2_{u,l}+\sum_{|u|>0}\sum_{l=0}^{k_t-|u|}\tilde{\sigma}_{u,l}^2\sum_{m=0}^{l+t+|u|-1}v_mb^m.
\end{align*}

Finally, since $v_m\leq b-1$, we have, for $u\subseteq\mathcal{S}$ such that $k_t\geq|u|$,
\begin{align*}
\sum_{l=0}^{k_t-|u|}\tilde{\sigma}_{u,l}^2\sum_{m=0}^{l+|u|+t-1}v_mb^m&\leq (b-1)\sum_{l=0}^{k_t-|u|}\tilde{\sigma}_{u,l}^2\frac{b^{l+|u|+t}-1}{b-1}\leq b^{k}\Big(\frac{1}{b^{k-t-|u|}} \sum_{l=0}^{k_t-|u|}\tilde{\sigma}_{u,l}^2 b^{l}\Big).
\end{align*}
This shows that
\begin{multline*}%\label{eq:tsN2}
 \sum_{m=t}^k v_mb^m\sum_{|u|>0}\, \sum_{|\kappa|>m-t-|u|}\sigma_{u,\kappa}^2\leq  \sum_{m=t}^{k} v_{m} b^m\sum_{|u|>0}\sum_{|\kappa|>k-t-|u|}\sigma^2_{u,\kappa}\\
 +b^{k}\sum_{|u|>0}\frac{1}{b^{k-t-|u|}} \sum_{l\leq k-t-|u|} \tilde{\sigma}_{u,l}^2 b^{l}
\end{multline*}
and the proof of the lemma is complete.

\end{proof}

To prove  Theorem \ref{thm:GenOwen}, and following the proof of \citet[][Lemma 4.11, p.56]{Niederreiter1992},  we decompose $\{\tilde{\bx}^n\}_{n=0}^{N-1}$, $N\geq 1$, into  scrambled $(\lambda_m,t,m,s)$-nets $\tilde{P}_m$, $m=t,\dots,k$, and a remaining set $\tilde{P}$ that contains strictly less than $b^t$ points. We recall that   $k$ is the largest power of $b$ such that $b^k\leq N$.
 
To construct this partition of $\{\tilde{\bx}^n\}_{n=0}^{N-1}$, let $N=\sum_{m=0}^ka_m b^m$ be the expansion of $N$ in base $b\geq 2$, with $a_m\in\{0,...,b-1\}$ and $a_k\neq 0$. Then, let $\tilde{P}_k=\{\tilde{\bx}^n\}_{n=0}^{a_kb^k-1}$ and, for $0\leq m\leq k-1$,  let $\tilde{P}_m$  be the point set made of the $\tilde{\bx}^n$'s with $\sum_{h=m+1}^k a_h b^h\leq n<\sum_{h=m}^k a_h b^h$. By definition of a $(t,s)$-sequence, $\tilde{P}_m$  is a scrambled $(a_m,t,m,s)$-nets in base $b\geq 2$ for $m=t,\dots,k$ while $\tilde{P}=\cup_{m=0}^{t-1}\tilde{P}_m$ has cardinality strictly smaller than $b^t$.

Using this decomposition of $\{\tilde{\bx}^n\}_{n=0}^{N-1}$ we have, using the convention that empty sums are equal to zero, 
\begin{align}
\var\bigg(\frac{1}{N}\sum_{n=0}^{N-1}\varphi(\tilde{\bx}^{n})\bigg)& =\var\bigg(\frac{1}{N} \sum_{\tilde{\bx}\in \tilde{P}}\varphi(\tilde{\bx})+\frac{1}{N}\sum_{m=t}^{k} \sum_{\tilde{\bx}\in \tilde{P}_{m}}\varphi(\tilde{\bx})\bigg)\notag\\
&\leq \Bigg(\frac{1}{N}\bigg\{\var\Big(\sum_{\tilde{\bx} \in \tilde{P}}\varphi(\tilde{\bx})\Big)\bigg\}^{1/2}+\frac{1}{N}\sum_{m=t}^{k}\bigg\{\var\Big(\sum_{\tilde{\bx} \in \tilde{P}_{m}}\varphi(\tilde{\bx})\Big)\bigg\}^{1/2}\Bigg)^{2}\notag\\
&=\frac{1}{N^2}\var\bigg(\sum_{\tilde{\bx} \in \tilde{P}}\varphi(\tilde{\bx})\bigg)
+\bigg(\frac{1}{N}\sum_{m=t}^{k}\bigg\{\var\Big(\sum_{\tilde{\bx} \in \tilde{P}_{m}}\varphi(\tilde{\bx})\Big)\bigg\}^{1/2}\bigg)^2\notag\\
&+\frac{2}{N}\bigg\{\var\Big(\sum_{\tilde{\bx} \in \tilde{P}}\varphi(\tilde{\bx})\Big)\bigg\}^{1/2}
\frac{1}{N}\sum_{m=t}^{k}\bigg\{\var\Big(\sum_{\tilde{\bx} \in \tilde{P}_{m}}\varphi(\tilde{\bx})\Big)\bigg\}^{1/2}\label{eq:bt}.
\end{align}
To bound  the first term of \eqref{eq:bt}, let  $\tilde{J}\subset\{0,\dots,N-1\}$ be such that $n\in \tilde{J}$ if and only if $\tilde{\bx}^n\in \tilde{P}$. Then, note that
\begin{align*}
\var\bigg(\sum_{\tilde{\bx} \in \tilde{P}}\varphi(\tilde{\bx})\bigg)\leq \bigg(\sum_{n \in \tilde{J}}\Big\{\var\big(\varphi(\tilde{\bx}^n)\big)\Big\}^{1/2}\bigg)^2=|\tilde{P}|^2\sigma^2<b^{2t}\sigma^2
\end{align*}
and therefore
\begin{equation}\label{eq:adt}
\frac{1}{N^2}\var\bigg(\sum_{\tilde{\bx} \in \tilde{P}}\varphi(\tilde{\bx})\bigg) \leq \frac{b^{2t}}{N^2}\sigma^2.
\end{equation}

To  bound the second term of \eqref{eq:bt} define, for $m\in\{t,\dots,k\}$, $\hat{I}_{m}=(a_m b^{m})^{-1}\sum_{\tilde{\bx}\in \tilde{P}_{m}}\varphi(\tilde{\bx})$ if $m$ is such that $a_m\neq 0$ and set $\hat{I}_m=0$ otherwise. Then,
\begin{align}
&\bigg(\frac{1}{N}\sum_{m=t}^{k}\bigg\{\var\Big(\sum_{\tilde{\bx} \in \tilde{P}_{m}}\varphi(\tilde{\bx})\Big)\bigg\}^{1/2}\bigg)^2= \bigg(\frac{1}{N}\sum_{m=t}^k \bigg\{\var\Big(a_m b^m \hat{I}_m\Big)\bigg\}^{1/2}\bigg)^2\notag\\
&=\frac{1}{N^2}\sum_{m=t}^k \var(a_m b^m\hat{I}_m)+\frac{2}{N^2}\sum_{k\geq m>n\geq t}\var(a_mb^m\hat{I}_m)^{1/2}
\var(a_nb^n\hat{I}_n)^{1/2}.\label{eq:MainB}
\end{align}

Using \eqref{res}  we have, for $m$ such that $a_m\neq 0$,
$$
\var(\hat{I}_{m})\leq \Gamma^{(b)}_{t,s} (a^mb^m)^{-1}\sum_{|u|>0}\, \sum_{|\kappa|>m-t-|u|}\sigma_{u,\kappa}^2
$$ 
and therefore, using Lemma \ref{lemma:M} and the fact that $b^k\leq N$,
$$
\sum_{m=t}^k \var(a_m b^m\hat{I}_m)\leq \Gamma^{(b)}_{t,s} \sum_{m=t}^k a_m b^m\sum_{|u|>0}\, \sum_{|\kappa|>m-t-|u|}\sigma_{u,\kappa}^2 \leq N\,\Gamma^{(b)}_{t,s}\, B^{(k)}_t
$$
where $B^{(k)}_t$ is as in the statement of the theorem. Hence, $
N^{-1}\sum_{m=t}^k\var(a_mb^m\hat{I}_m)\leq \Gamma^{(b)}_{t,s}B_t^{(k)}$.

To study the second term of \eqref{eq:MainB}, let $m>n\geq t$. Then, easy computations show that 
\begin{align*}
\frac{(a_na_mb^{n+m})^{1/2}}{\Gamma^{(b)}_{t,s}}\var(\hat{I}_m)^{1/2}\var(\hat{I}_n)^{1/2}&\leq   \Big(\sum_{|u|>0}\, \sum_{|\kappa|>m-t-|u|}\sigma_{u,\kappa}^2\Big)^{1/2}\Big(\sum_{|u|>0}\, \sum_{|\kappa|>n-t-|u|}\sigma_{u,\kappa}^2\Big)^{1/2}\notag\\
&\leq \frac{1}{2}\Big(\sum_{|u|>0}\, \sum_{|\kappa|>m-t-|u|}\sigma_{u,\kappa}^2+\sum_{|u|>0}\, \sum_{|\kappa|>n-t-|u|}\sigma_{u,\kappa}^2\Big)
\end{align*}
 and therefore
\begin{align*}
&\frac{2}{\Gamma^{(b)}_{t,s}}\sum_{m=t+1}^k \sum_{n=t}^{m-1}\var(a_mb^m \hat{I}_m)^{1/2}\var(a_nb^n \hat{I}_n)^{1/2} \\
&=\frac{2}{\Gamma^{(b)}_{t,s}} \sum_{m=t+1}^k \sum_{n=t}^{m-1}(a_ma_m)^{1/2}b^{\frac{n+m}{2}}\left\{(a_ma_nb^{n+m})^{1/2}
\big\{\var(\hat{I}_m)\big\}^{1/2}\big\{\var(\hat{I}_n))\big\}^{1/2}\right\}\\
&\leq \sum_{m=t+1}^k \sum_{n=t}^{m-1}\Big\{(a_ma_n)^{1/2}b^{\frac{n+m}{2}} \sum_{|u|>0}\, \sum_{|\kappa|>m-t-|u|}\sigma_{u,\kappa}^2+(a_na_m)^{1/2}b^{\frac{m+n}{2}}\sum_{|u|>0}\, \sum_{|\kappa|>n-t-|u|}\sigma_{u,\kappa}^2\Big\}.
\end{align*}
Consequently, since $\sum_{n=t}^{m-1}a_n^{1/2}b^{n/2}\leq c_b b^{m/2}$, with $c_b=\sqrt{b-1}/(b^{1/2}-1)$, we have
\begin{align}\label{eq:Bt0}
\frac{2}{\Gamma^{(b)}_{t,s}}\sum_{m=t+1}^k \sum_{n=t}^{m-1}\var(a_mb^m \hat{I}_m)^{1/2}&\var(a_nb^n \hat{I}_n)^{1/2}\leq  c_b\Big(\sum_{m=t+1}^k  a_mb^m\sum_{|u|>0}\, \sum_{|\kappa|>m-t-|u|}\sigma_{u,\kappa}^2\Big)\notag\\
&+\sum_{m=t+1}^k a_m^{1/2} b^{m/2}\Big(\sum_{n=t}^{m-1}a_n^{1/2} b^{n/2}\sum_{|u|>0}\, \sum_{|\kappa|>n-t-|u|}\sigma_{u,\kappa}^2\Big)
\end{align}
where, by Lemma \ref{lemma:M}, the first term in bracket is bounded by $NB^{(k)}_t$. For the  second term in bracket, we have, using Lemma \ref{lemma:M} (where  $k$ is replaced by $m-1$ and $b$  by $b^{1/2}>1$),
\begin{multline*}
\sum_{m=t+1}^k a_m^{1/2} b^{m/2}\sum_{n=t}^{m-1}a_n^{1/2}b^{n/2}\sum_{|u|>0}\, \sum_{|\kappa|>n-t-|u|}\sigma_{u,\kappa}^2\\
\leq \sum_{m=t+1}^k a_m^{1/2}b^{m/2}\Big(\sum_{n=t}^{m-1}a^{1/2}_nb^{n/2}\sum_{|u|>0}\sum_{|\kappa|>m-1-t-|u|}\sigma^2_{u,\kappa}\Big.\\
\Big.+b^{\frac{m-1}{2}}\sum_{|u|>0}b^{-\frac{m-1-t-|u|}{2}} \sum_{l\leq m-1-t-|u|}\tilde{\sigma}_{u,l}^2 b^{l/2}\Big)
\end{multline*}
where we recall that, for $u\subseteq\mathcal{S}$ and for $l\in\mathbb{Z}$,  $\tilde{\sigma}^2_{u,l}= \sum_{\kappa: |\kappa|=l}\sigma^2_{u,\kappa}$ if $l\geq 0$ and $\tilde{\sigma}^2_{u,l}=0$ otherwise. Then, using again the fact that $\sum_{n=t}^{m-1}a_n^{1/2}b^{n/2}\leq c_b b^{m/2}$, the right-hand side of the last expression is bounded by
\begin{multline}\label{eq:brack2}
 c_b\sum_{m=t+1}^{k} a_{m}b^{m}\sum_{|u|>0}\,\sum_{|\kappa|> m-1-t-|u|}\sigma_{u,\kappa}^2+b^{\frac{t}{2}}\sum_{|u|>0}b^{\frac{|u|}{2}}\,\sum_{m=t+1}^{k}a^{1/2}_mb^{m/2}\sum_{l\leq m-1-t-|u|}\,b^{l/2}\tilde{\sigma}_{u,l}^2
\end{multline}
with the first term   bounded by $N B^{(k)}_{t+1}$ using  Lemma \ref{lemma:M}. 

To simplify the notation in what follows, let $k_t=k-t$ and $a'_m=a_{m+t}$ for $m=t,\dots,k$. Then,  in the same spirit as for  the derivation of the upper bound given in equation \eqref{eq:tsN},  the second term of \eqref{eq:brack2} can be rewritten as 
\begin{align*}
b^{\frac{t}{2}}\sum_{|u|>0}b^{\frac{|u|}{2}}\,\sum_{m=t+1}^{k}a^{1/2}_mb^{m/2}\sum_{l=0}^{m-1-t}\,b^{\frac{l-|u|}{2}}\tilde{\sigma}_{u,l-|u|}^2&=b^{t+\frac{1}{2}}\sum_{|u|>0}\,\sum_{m=0}^{k_t-1}(a'_{m+1}b^{m})^{1/2}\sum_{l=0}^{m}\,b^{\frac{l}{2}}\tilde{\sigma}_{u,l-|u|}^2\\
&=b^{t+\frac{1}{2}}\sum_{|u|>0}\,\sum_{l=0}^{k_t-1}b^{\frac{l}{2}}\,\tilde{\sigma}_{u,l-|u|}^2\sum_{m=l}^{k_t-1}(a'_{m+1}b^{m})^{1/2}
\end{align*}
with $b^{t+\frac{1}{2}}\sum_{m=l}^{k_t-1}(a'_{m+1}b^{m})^{1/2}\leq c_bb^{\frac{k_t+1}{2}}$. Therefore,
\begin{align*}
b^{\frac{t}{2}}\sum_{|u|>0}b^{\frac{|u|}{2}}\,\sum_{m=t+1}^{k}a_mb^{m/2}\sum_{l\leq m-1-t-|u|}\,b^{\frac{l}{2}}\tilde{\sigma}_{u,l}^2&\leq c_b b^{\frac{k_t+1}{2}}\sum_{|u|>0}\,\sum_{l=0}^{k_t-1}b^{\frac{l}{2}}\tilde{\sigma}_{u,l-|u|}^2\\
&=c_bb^{\frac{k_t+1}{2}}\sum_{|u|>0}b^{\frac{|u|}{2}}\sum_{l=0}^{k_t-1-|u|}\,b^{\frac{l}{2}}\tilde{\sigma}_{u,l}^2\notag\\
&\leq c_bb^{k}\sum_{|u|>0} \frac{1}{b^{\frac{k_t-1-|u|}{2}}}\sum_{l=0}^{k_t-1-|u|}\,b^{\frac{l}{2}}\tilde{\sigma}_{u,l}^2.
\end{align*}
We conclude the proof using the fact that  $(c_1^{1/2}+c_2^{1/2})^2\leq 2(c_1+c_2)$.

\subsection{Proof of the bounds given in \eqref{eq:B1}- \eqref{eq:B2}}\label{app:B}

To prove the bound given in \eqref{eq:B1} note that, for any $c\in\mathbb{N}$, $B_c^{(k)}\leq \sigma^2$. In addition, from \eqref{eq:Bt0}, the term $2(\Gamma^{(b)}_{t,s})^{-1}\sum_{m>n\geq t}\var(a_mb^m \hat{I}_m)^{1/2}\var(a_nb^n \hat{I}_n)^{1/2}$ is bounded by
\begin{align*}
c_b N\sigma^2+\sum_{m=t+1}^k a_m^{1/2} b^{m/2}\sum_{n=t}^{m-1}a_n^{1/2} b^{n/2}\sum_{|u|>0}\, \sum_{|\kappa|>n-t-|u|}\sigma_{u,\kappa}^2&\leq \sigma^2\Big(c_bN+\sum_{m=t+1}^k a_m^{1/2}b^{m/2}\sum_{n=0}^{m-1}a^{1/2}_nb^{n/2}\Big)\\
 &\leq 2c_b N\sigma^2
\end{align*}
and therefore the bound in \eqref{eq:B1} follows from \eqref{eq:bt}-\eqref{eq:MainB}. 

To prove the bound for $t=0$, first note that, in this case, $\tilde{P}=\emptyset$. In addition,  a $(0,s)$-sequence in base $b$ exists only if $ b\geq s$ \citep[see][Corollary 4.36, p.141]{dick2010digital} and therefore the gain factors $\Gamma_{u,\kappa}$ are bounded by  $\Gamma_{0,s}^{(b)}=e$. Hence, 
$
\sum_{m=0}^k\var\left(a_mb^m\hat{I}_m\right)\leq \Gamma_{0,s}^{(b)} N\sigma^2
$
and, using \eqref{eq:MainB}, we conclude that
\begin{align*}
\var\bigg(\frac{1}{N}\sum_{n=0}^{N-1}\varphi(\tilde{\bx}^n)\bigg)&\leq \frac{\sigma^2}{N}e\,\left(1+2c_b\right)\leq \frac{\sigma^2}{N}e\,(3+2\sqrt{2}).
\end{align*}

\end{document}